\newcommand{\DeclareMathActive}[2]{%
  \expandafter\edef\csname keep@#1@code\endcsname{\mathchar\the\mathcode`#1 }
  \begingroup\lccode`~=`#1\relax
  \lowercase{\endgroup\def~}{#2}%
  \AtBeginDocument{\mathcode`#1="8000}%
}
\newcommand{\std}[1]{\csname keep@#1@code\endcsname}
\patchcmd{\newmcodes@}{\mathcode`\-\relax}{\std@minuscode\relax}{}{\ddt}
\algnewcommand{\AND}{\textbf{ and }}
\algnewcommand{\OR}{\textbf{ or }}
\algrenewcommand\alglinenumber[1]{\footnotesize #1}
\def\Bin{\operatorname{Bin}}
\def\I_#1{\ensuremath{I_{\text{#1}}}}
\def\Time#1{\ensuremath{#1.\mathsf{time}}}
\def\Max#1{\ensuremath{#1.\mathsf{max}}}
\def\LastMax#1{\ensuremath{#1.\mathsf{lastMax}}}
\def\Interactions#1{\ensuremath{#1.\mathsf{interactions}}}
\def\Geom{\operatorname{Geom}}
\def\PlainTime{\ensuremath{\mathsf{time}}\xspace}
\def\PlainMax{\ensuremath{\mathsf{max}}\xspace}
\def\PlainLastMax{\ensuremath{\mathsf{lastMax}}\xspace}
\def\PlainInteractions{\ensuremath{\mathsf{interactions}}\xspace}
\begin{abstract}
The population protocol model describes collections of distributed agents that interact in pairs to solve a common task.
We consider a \emph{dynamic} variant of this prominent model, where we assume that an adversary may change the population size at an arbitrary point in time.
In this model we tackle the problem of \emph{counting the population size:} in the dynamic size counting problem the goal is to design an algorithm that computes an approximation of $\log n$.
This estimate can be used to turn static, non-uniform population protocols, i.e., protocols that depend on the population size $n$, into dynamic and loosely-stabilizing protocols.

Our contributions in this paper are three-fold.
Starting from an arbitrary initial configuration, we first prove that the agents converge quickly to a valid configuration where each agent has a constant-factor approximation of $\log n$, and once the agents reach such a valid configuration, they stay in it for a polynomial number of time steps.
Second, we show how to use our protocol to define a uniform and loosely-stabilizing phase clock for the population protocol model.
Finally, we support our theoretical findings by empirical simulations that show that our protocols work well in practice.
\end{abstract}
\keywords{Population Protocols, Size Counting, Loose Stabilization, Phase Clocks}
\begin{document}

\title{Dynamic Size Counting in the Population Protocol Model}
\subtitle{Counting (on) agents to drive a phase clock}

\author{Dominik Kaaser}
\email{dominik.kaaser@tuhh.de}
\orcid{0000-0002-2083-7145}

\affiliation{%
  \institution{TU Hamburg}
  \city{Hamburg}
  \country{Germany}
}

\author{Maximilian Lohmann}
\email{maximilian.lohmann@tuhh.de}
\orcid{0009-0007-1791-7906}

\affiliation{%
  \institution{Christian Doppler Laboratory for Blockchain Technologies for the Internet of Things, TU Hamburg}
  \city{Hamburg}
  \country{Germany}
}

\maketitle

\vfill
\mbox{}

\section{Introduction}
\label{sec:introduction}

Population protocols \cite{DBLP:journals/dc/AngluinADFP06} are a prominent model for computation among simple and anonymous distributed \emph{agents}.
Agents have limited memory and are commonly modeled as finite state machines.
They interact with each other in pairs where they observe each others' states and update their states according to a common transition function.
In each time step a pair of agents is chosen uniformly at random to interact.
The computation does not halt but the agents rather converge to a desired configuration or output.
Population protocols have a multitude of applications.
They are closely related to chemical reaction networks \cite{DBLP:journals/nc/SoloveichikCWB08} and model interacting particle systems \cite{liggett1985interacting}, programmable
chemical controllers at the level of DNA \cite{chen2013}, or biochemical regulatory processes in living cells \cite{cardelli2012}.

There are two prohibitive factors that limit the application of classical population protocols in realistic scenarios:
all agents start in a predefined state, and the population size $n$ is fixed over time.
At the same time, in their original paper \textcite{DBLP:journals/dc/AngluinADFP06} motivate the model with a flock of birds that are equipped with temperature sensors.
Clearly, the number of birds in a flock changes over time \cite{DBLP:conf/dcoss/Delporte-GalletFGR06}.
Even worse, throughout hunting season there is a looming threat that a poaching adversary selectively targets certain types of birds in the flock.
In this paper we therefore design a \emph{uniform dynamic size counting} protocol that can be used to transform modern (non-uniform) protocols into dynamic protocols.
In a uniform population protocol the state space and the transition function do not depend on the population size $n$.

We prove that our protocol is \emph{loosely-stabilizing} according to the definition by \textcite{DBLP:journals/tcs/SudoNYOKM12} even if the population size changes over time: when started in any arbitrary state, our protocol quickly reaches a state with correct output and remains in such a state for a polynomial number of interactions.
Our protocol improves upon the state space complexity of the best known protocol by \textcite{DBLP:journals/tcs/DotyE21} at the expense of a slightly increased running time.

A plenitude of recent works (e.g., \cite{DBLP:conf/soda/AlistarhAEGR17, DBLP:conf/soda/AlistarhAG18,DBLP:conf/podc/BilkeCER17,DBLP:conf/wdag/BerenbrinkEFKKR18, DBLP:conf/focs/DotyEGSUS21,DBLP:conf/focs/DotyEGSUS21,DBLP:conf/stoc/BerenbrinkGK20})
have proposed efficient but non-uniform population protocols where the required number of states slowly grows with the population size, and the transition functions encode a function of $n$.
Unfortunately, in the context of biologically-inspired computing, determining the exact population size or even a polynomial approximation may not be feasible.
Size counting protocols \cite{DBLP:conf/podc/BerenbrinkKR19,DBLP:conf/podc/DotyE19,DBLP:journals/tcs/DotyE21} address precisely this issue: they provide an approximation of the number of agents which can then be used to execute non-uniform payload protocols.

\subsection{Results in a Nutshell}
The problem of dynamic size counting is equivalent to the problem of loosely-stabilizing size counting \cite{DBLP:conf/sand/DotyE22}.
In this paper we therefore present the first protocol using an asymptotically optimal number of memory bits for the loosely-stabilizing size counting problem.
Our size counting protocol doubles as a loosely-stabilizing uniform phase clock that can be used to synchronize populations.

We remark that the clock by \textcite{DBLP:conf/sand/BerenbrinkBHK22} is also loosely-stabilizing.
However, their clock encodes $\log n$ in its transition function and thus is not uniform.
Clearly, once a protocol encodes the population size it cannot be applied in our dynamic setting where the population size may change over time.

Assuming $\log \hat{n}$ to be the largest initial estimate among the population, our protocol converges in $O(\log \hat{n} + \log n)$ parallel time w.h.p.
All agents then continue holding correct estimates for polynomial parallel time w.h.p.
When the largest value initially stored in any agents' state is $s$, our protocol requires $O(\log s + \log \log n)$ bits of memory w.h.p.
The expression \emph{with high probability (w.h.p.)} denotes a probability of at least $1 - n^{-\Omega(1)}$.

\subsection{Background and Related Work}

\paragraph{Approximate Counting.}
\textcite{DBLP:conf/soda/AlistarhAEGR17} introduce a protocol for approximate size counting using coin flips.
Each agent flips a coin until it lands on heads.
The resulting number of trials has (almost) a geometric distribution with parameter $1/2$.
The largest value sampled by one of the $n$ agents is a constant-factor approximation of $\log n$.
In the original population protocol model, however, agents do not have access to a source of randomness.
Thus, so-called synthetic coins are used \cite{DBLP:conf/soda/AlistarhAEGR17,DBLP:conf/soda/BerenbrinkKKO18}.
This technique does not require an external source of randomness but instead it extracts randomness from the random scheduler.

To simplify the analysis of size counting protocols using random coins, \textcite{DBLP:conf/podc/DotyE19} assume that agents have access to random bits.
They present an approximate size counting protocol that computes an estimate of $\log n \pm 5.7$.
This is accomplished by taking the average of $O(\log n)$ maxima of $n$ geometrically distributed random variables each.

\Textcite{DBLP:conf/podc/SudoOIKM19} define a protocol that provides truly uniform synthetic random coins by splitting the population into two groups, \emph{leaders} and \emph{followers}. Leaders generate random coins by observing whether they are initiator or responder in an interaction.
Unfortunately, this approach cannot be used in our dynamic setting since it may occur that all leaders are removed from the population.

\Textcite{DBLP:conf/podc/BerenbrinkKR19} present an approximate size counting protocol that computes $\lfloor \log n \rfloor$ or $\lceil \log n \rceil$.
Here, the agents elect a leader that generates $M$ tokens which are balanced using a load-balancing algorithm.
If some agents do not have a token after balancing, $M$ must have been smaller than $n$.
In this case, $M$ is doubled and the load balancing restarts.
When the process terminates, the agents learn $\log n \pm 1$ from the leader.
As before, this protocol is also not suitable in the dynamic setting where the single leader agent may be removed from the population.

For further details we refer to the survey of size counting in population protocols by \textcite{DBLP:journals/tcs/DotyE21}.
There, the authors also introduce a mechanism to compose uniform counting protocols with non-uniform protocols.
Nevertheless, their compound protocols are not applicable in the dynamic setting where the population size may change over time.

\paragraph{Dynamic Size Counting}
Dynamic population protocols are one of the open problems described in the seminal paper by \textcite{DBLP:journals/dc/AngluinADFP06}.
In this setting \textcite{DBLP:conf/sand/DotyE22} introduce an \emph{adversary} that is capable of adding and removing agents from the population.
All agents are added in some predefined state, but agents can be removed arbitrarily.
With regards to counting the authors prove that dynamic size counting is equivalent to loosely-stabilizing size counting, and they present a loosely-stabilizing size counting protocol that solves the dynamic size counting problem.

In their protocol, \textcite{DBLP:conf/sand/DotyE22} use successive random coin flips to approximate $\log n$ w.h.p. 
However, the naive approach of always spreading the largest estimate breaks as soon as the population shrinks.
Instead, they use the robust detection algorithm introduced in \cite{DBLP:conf/dna/AlistarhDKSU17} to detect when the estimate is inaccurate.
To this end, each agent stores a list of length $O(\log n)$ which can be reduced to $O(\log \log n)$ at the expense of a reduced sensitivity to population changes.
The idea is to not only use the maximum of $n$ geometrically distributed random variables but to use the detection algorithm from \cite{DBLP:conf/dna/AlistarhDKSU17} on their first missing value.
As a result, every agent uses $O(\log^2 s + \log n \log \log n)$ bits or ${O(\log^2 s + (\log \log n)^2)}$ bits of memory w.h.p.\ in the optimized version.
Here, $s$ describes the maximum value stored in any variable of the agent's memory.
For this protocol \citeauthor{DBLP:conf/sand/DotyE22} prove loose-stabilization:
starting from an arbitrary configuration their protocol converges in ${O(\log n + \log \log \hat{n})}$ time, and once the protocol has converged all agents hold an estimate of $\Theta(\log n)$ for a polynomial time.
Here, $\log \hat{n}$ is the initial maximum estimate among the population.

\paragraph{Phase Clocks.}
Phase clocks have widespread use in distributed systems.
They can be used for mode changes, triggering different behavior, periodic resets to a predefined state, or periodic synchronization of a \emph{time} variable  \cite{DBLP:journals/ppl/AroraDG91}.
In population protocols, phase clocks can be categorized into leader driven, junta driven (i.e., by a group of leaders), and leaderless.
An agent's \emph{time} can be visualized as a hand on a clock face.
All agents' hands should roughly point in the same direction, i.e., lie within a specific interval.
The phase clock is divided into hours, each requiring $\Theta(\log n)$ time to pass.

In the context of population protocols, phase clocks are a fundamental building block many efficient population protocols rely on to orchestrate their distributed computation.
Intuitively, phase clocks divide time into blocks of $\Theta(n \log n)$ interactions each.
These blocks are then sufficiently long such that a simple epidemic spreading process succeeds in transmitting some information to all agents.
This allows block-synchronization of the population and, ultimately, efficient algorithms for the population protocol model.

The first phase clocks were introduced and analyzed by \textcite{DBLP:journals/dc/AngluinAE08a}. 
Together with a \emph{junta-election} mechanism~\cite{DBLP:journals/jacm/GasieniecS21} they have been employed to solve leader election \cite{DBLP:conf/soda/GasieniecS18}, majority \cite{DBLP:journals/dc/BerenbrinkEFKKR21,DBLP:conf/sand/BerenbrinkBHK22}, plurality consensus \cite{DBLP:conf/podc/BankhamerBBEHKK22,DBLP:conf/soda/BankhamerBBEHKK22}, and size counting \cite{DBLP:conf/podc/BerenbrinkKR19}. 
Simple phase clocks are implemented by counters modulo some large value $m$ (see, e.g., \cite{DBLP:conf/soda/AlistarhAG18,DBLP:journals/dc/AngluinAE08a,DBLP:journals/dc/BerenbrinkEFKKR21,DBLP:journals/jacm/DolevW04,DBLP:conf/soda/GasieniecS18,DBLP:conf/sand/BerenbrinkBHK22}).
Whenever the counter of some agent $u$ crosses zero, the agent receives a \emph{signal} indicating that a new phase  starts.
Following the nomenclature by \textcite{DBLP:conf/sand/BerenbrinkBHK22}, the signals divide time into alternating intervals: so-called \emph{burst-intervals} and so-called \emph{overlap-intervals}.
During a burst-interval, every agent gets exactly one signal.
An overlap-interval consists of those interactions between two burst-intervals where no agent gets a signal.
A burst-interval together with the subsequent overlap-interval forms a \emph{phase}.

\Textcite{DBLP:conf/sand/BerenbrinkBHK22} present a loosely-stabilizing phase clock protocol that quickly synchronizes arbitrary configurations.
This leaderless phase clock requires $O(\log n)$ states and runs forever.
After taking at most $O(\log n)$ time to enter a synchronous configuration, the population stays synchronized at any time w.h.p.
However, their protocol is non-uniform: it requires an approximation of $\log n$ and is thus unsuitable to generate an approximation of $\log n$ in the dynamic setting.
Nevertheless, while their exact approach is not applicable in our setting, their three-phase division of a phase clock inspired our loosely-stabilizing protocol.

\paragraph{Detection.}
\Textcite{DBLP:conf/dna/AlistarhDKSU17} propose a protocol called \emph{detection} that allows all agents to learn whether a so-called \emph{source} agent is present in the population.
The idea is to use transitions of the form ${(u, v) \rightarrow (\min\{u+1, v+1\}, \min\{u+1, v+1\})}$ except for source agents which do not change their state but stay at zero.
If the agents reach some value in $\Omega(\log n)$, no source is present w.h.p.
In 
\cite{DBLP:conf/dna/AlistarhDKSU17} the authors prove lower bounds for the minimum value of any agent if no source agent is present. \Textcite{DBLP:journals/tcs/SudoNYOKM12} prove corresponding upper bounds.
\Textcite{DBLP:conf/wdag/SudoEIM21} analyze a related transition function $(u, v) \rightarrow (\max\{u-1, v-1\}, \max\{u - 1, v - 1\})$ for a protocol called Countdown with Higher Value Propagation (CHVP).
In this paper, we use the CHVP protocol to synchronize the population.

\section{Model and Results}
\label{chapter:2}

Throughout this paper, we use $V$ to denote the set of all agents and $|V| = n$ to denote the number of agents.
The \emph{state space} $Q$ consists of tuples, where each entry is called a \emph{variable}.
A \emph{configuration} $C\colon V \rightarrow Q$ maps each agent to a state in $Q$.
The \emph{execution} of a protocol $\Xi = \{C_0, C_1, \dots \}$ is a sequence of configurations.
The configuration $C_0$ is called the \emph{initial configuration}.
Each configuration $C_{i+1}$ is generated from configuration $C_i$ by selecting two agents uniformly at random and updating their states according to the transition function defined by our algorithm.
Thus $C_i$ describes the configuration after $i$ interactions.
We use pseudocode to describe how two interacting agents change their variables in an interaction.

A protocol is \emph{$(t_c(n), t_h(n))$-loosely-stabilizing} if, from any configuration, it converges in $t_c(n)$ time to a correct configuration and holds a correct configuration for $t_h(n)$ time w.h.p.
Time is measured in parallel time, where one unit of parallel time consists of $n$ interactions.
Accordingly, $C_n$ describes the configuration after one parallel time step.
For the sake of comparability we use the same metric as in \cite{DBLP:conf/sand/DotyE22} and measure space complexity in bits rather than in the number of states.

\subsection{Simplified Algorithm}

\algnewcommand{\LineComment}[1]{\State \(\triangleright\) #1}
\algrenewcommand\algorithmicthen{}
\begin{algorithm}[t]
\caption{SimplifiedDynamicSizeCounting($u, v$)}
\label{alg:dynamic_size_counting_simplified}
\label{alg:simplified}
\begin{algorithmic}[1]
\If{$\Time{u} \leq 0$ \Comment{wrap-around}\\\hspace{.5em}\OR
($u \in \I_{reset} \AND v \in \I_{exchange}$) \Comment{reset $\rightarrow$ exchange} \\\hspace{.5em}\OR
($u \notin \I_{exchange} \AND \Max u \neq \Max v$) \Comment{hold $\rightarrow$ exchange}\\
\textbf{then}
}
    \State $grv \gets \Geom(1/2)$
    \State $(\Time u, \Max u) \gets (\tau_1 \cdot \max\{\Max u, grv\}, grv)$
\EndIf
\vspace{1ex}

\If{$u, v \in \I_{exchange} \AND \Max u < \Max v$} \Comment{exchange maximum}
    \State $(\Time u, \Max u) \gets (\tau_1 \cdot \Max v, \Max v)$
\EndIf

\State $\Time u \gets \max\{\Time u, \Time v\} - 1$ \Comment{update time}

\end{algorithmic}
\end{algorithm}

To give an intuitive overview of our algorithm we first present a simplified version (see \cref{alg:simplified}).
Our protocol uses geometrically distributed random variables (GRVs) to estimate the population size.
These variables can be obtained by repeatedly flipping a coin and counting how many flips are needed until it lands on heads.
(We later explain how agents could generate such random variables from the inherent randomness of the scheduler.) 
The underlying idea is that the maximum of $n$ independent random variables with distribution $\Geom(1/2)$ is in $\Theta(\log n)$ w.h.p.
Thus, the agents generate a linear number of GRVs and spread their maximum via epidemic spreading.
To adapt to changing population sizes, this process is repeated cyclically akin to a countdown timer.

In the simplified case, each agent $u$ stores two variables, \PlainMax and \PlainTime.
The variable \PlainMax stores the current maximum value of GRVs that agent $u$ has encountered.
The variable \PlainTime tracks when the maximum GRV was last adopted by any agent.
It is a simple countdown that starts at a multiple of \PlainMax and uses CHVP to ensure little deviation among the population.
Once the \PlainTime reaches zero, it wraps around to a multiple of either the current maximum value \PlainMax or a newly sampled GRV, if the latter happens to be larger. This wrap-around is called a \emph{reset}.

\begin{figure}[b]
\vspace{-3ex}
\centering
\begin{tikzpicture}[scale=0.33]
\clip (-8,-5) rectangle (8,7);
  \draw[line width=1pt] (0,0) circle (4.5cm); 
  \node (B) at (-1,0) {};
  \node (A) at (90:6.0cm) {$0 \rightarrow \tau_1 \cdot \PlainMax$};
  
  \draw [line width=0.5pt,dashed,-{Stealth[scale=1.5]}] (0:4.5cm)  to [out=45,in=20,in looseness=1.5,out looseness=1.25] (90:4.5cm);
  
  \draw[line width=1pt] (0,0) -- (-30:4.5cm) arc (-30:90:4.5cm) -- cycle;
  \node[line width=1pt, ->] (B) at (-30:6.5cm) {$\tau_2 \cdot \PlainMax$};
  
  \draw [line width=0.5pt,dashed,-{Stealth[scale=1.5]}] (240:4.5cm)  to [out=160,in=150,in looseness=1.5,out looseness=2.5] (90:4.5cm);
  
  \draw[line width=1pt] (0,0) -- (210:4.5cm) arc (210:330:4.5cm) -- cycle;
  \node at (210:6.5cm) {$\tau_3 \cdot \PlainMax$};

  \draw [line width=0.5pt,dashed,-{Stealth[scale=1.5]}] (180:4.5cm)  to [out=135,in=160,in looseness=1.5,out looseness=1.25] (90:4.5cm);
  
    \node at (30:2.5cm) {Exchange};
    \node at (-90:2.5cm) {Hold};
    \node at (150:2.5cm) {Reset};
\end{tikzpicture}
\caption{Phase transitions}
\Description[The order of phase transitions]{The phases can be displayed as a clock, with the exchange phase being in the right top third, the hold phase in the bottom third, and the reset phase in the top left third.}
\end{figure}
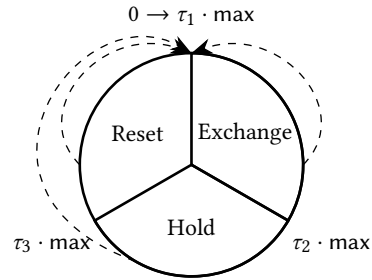

Similarly to \cite{DBLP:conf/sand/BerenbrinkBHK22}, the state space for the \PlainTime variable is divided into three phases which we call \emph{exchange}, \emph{hold}, and \emph{reset}.
The phases are defined based on an agent's current maximum estimate \PlainMax.
To this end we use three constants $\tau_1 > \tau_2 > \tau_3$
and define that an agent $u$ is in the exchange phase if its $\Time u$ variable is in 
$(\tau_1 \cdot \Max u, \tau_2 \cdot \Max u]$, in the hold phase if its $\Time u$ variable is in 
$(\tau_2 \cdot \Max u, \tau_3 \cdot \Max u]$, and in the reset phase if otherwise its $\Time u$ variable is in 
$(\tau_3 \cdot \Max u, 0]$.

In the first phase, all agents exchange their maximum value via epidemic spreading.
When agents adopt a new larger maximum value, their $\PlainTime$ is rewound to $\tau_1 \cdot \PlainMax$.
Agents use the detection protocol~\cite{DBLP:conf/dna/AlistarhDKSU17} to synchronize their $\PlainTime$ by detecting when the maximum was last adopted.
This detection protocol is based on epidemic spreading and it concludes before any agent leaves the exchange phase.
At this point, all agents have the same maximum value w.h.p.

After the exchange phase, the agents move to the hold phase.
If, from this point on, an agent encounters a different maximum value (a low probability event), the premise of the exchange phase has failed.
In this case the agent resets: it generates a new GRV and sets its $\PlainTime$ back to $\tau_1 \cdot \PlainMax$.
Otherwise, if all maximum values align, all agents eventually move to the reset phase.
Here, they will immediately reset when interacting with any agent in the exchange phase.
Thus, when the first agent leaves the reset phase for the exchange phase, all other agents quickly follow.
This effectively starts the next exchange phase via epidemic spreading.

\subsection{Our Results}
While the algorithm presented in \cref{alg:simplified} aims to convey the idea of our protocol, we do need to modify it to facilitate a theoretical analysis.
We will describe the required additions in the next section.
For the amended protocol we show the following theorem.

\begin{theorem}
    \label{theorem:1}
    For any constant $k \geq 2$, and largest initial size estimate $\log \hat{n}$, our algorithm is a $\left(O(\log \hat{n} + \log n), \Theta(n^{k-1} \log n)\right)$-loosely-stabilizing protocol solving the size counting problem w.h.p.
    When $s$ denotes the largest value initially stored in any of the agents' variables, our protocol requires $O(\log s + \log \log n)$ bits per agent w.h.p.
\end{theorem}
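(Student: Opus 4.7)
The plan is to decompose the theorem into the standard two pieces of a loose-stabilization claim together with a separate space bound. For the \emph{convergence} bound, I would first show that regardless of the initial configuration, every agent reaches a wrap-around (reset) within $O(\log \hat n + \log n)$ parallel time. Two effects drive this: the CHVP-style countdown $\Time u \gets \max\{\Time u, \Time v\} - 1$ deterministically decreases every time value once the current maximum has propagated via an $O(\log n)$-epidemic, so initial times of magnitude at most $\tau_1 \log \hat n$ are exhausted in $O(\log \hat n + \log n)$ parallel time, at which point the wrap-around rule forces each such agent to sample a fresh GRV and restart its clock. Within one further exchange cycle a linear number of fresh $\Geom(1/2)$ samples have been generated, so by standard tail bounds their population-wide maximum is $\Theta(\log n)$ w.h.p. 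A final application of the epidemic bound propagates this maximum to every agent before any of them leaves the exchange phase, because $\tau_1$ is chosen so that the exchange window strictly exceeds the $O(\log n)$ epidemic completion time. At the end of convergence every agent's $\Max u$ equals the population-wide GRV maximum and is therefore a constant-factor approximation of $\log n$.

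For the \emph{holding} bound, I would fix a valid configuration as starting point and bound, over a window of $\Theta(n^{k-1}\log n)$ parallel time, the probability of each of three failure modes: (i) a reset produces a GRV strictly larger than the current $\Max u$, spuriously rewinding the clock; (ii) two agents in the hold phase end up with distinct values of $\Max u$, which the pseudocode interprets as a transition back into exchange; or (iii) CHVP drift pushes some $\Time u$ outside the interval defining its current phase. Event (i) has per-reset probability at most $n^{-c(k)}$, where $c(k)$ is determined by algorithmic constants that scale with $k$; choosing them so that $c(k) \geq k+1$ leaves the union bound over the $O(n^{k-1}\log n)$ resets in the window at $o(1)$. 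Event (ii) is excluded because the invariant established at the end of convergence — uniform $\Max u$ across the population — is preserved as long as no reset fires, and firing of a reset is precisely event (i). Event (iii) is controlled by the existing CHVP synchronization bounds: the spread of time values is $O(\log n)$ w.h.p., which fits strictly inside the phase widths $(\tau_i - \tau_{i+1})\Max u$. Under the dynamic adversary the same bounds apply inside any subwindow in which the population size changes by at most a constant factor, and by slicing the holding window into such subwindows one preserves correctness throughout.

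For the \emph{space bound}, once convergence has occurred every $\Max u$ is either the population-wide GRV maximum or a strictly smaller fresh sample — both $O(\log n)$ w.h.p. — and $\Time u$ is at most $\tau_1 \cdot \Max u = O(\log n)$, so both variables fit in $O(\log\log n)$ bits. The additive $O(\log s)$ accounts for adversarial initial values of magnitude $s$ that may survive in an agent's state until its first wrap-around. The main obstacle, in my view, is the holding argument under the adversary: both the target $\log n$ and the per-reset failure exponent depend on a moving population size, so calibrating algorithmic constants so that (a) the exchange phase is long enough for epidemic completion, (b) CHVP drift fits strictly inside the phase widths, and (c) the per-reset failure probability survives a union bound over $n^{k-1}\log n$ events, all simultaneously, is the quantitatively delicate step and precisely where the parameter $k$ enters the algorithm's design.
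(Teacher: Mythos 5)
Your overall decomposition (convergence + holding + space) matches the paper's, and the space argument is essentially the paper's \cref{lemma:space}. But the core of your holding-time argument rests on a misreading of how the protocol behaves once converged. You treat ``a reset produces a GRV larger than the current \PlainMax'' as a failure mode and argue that the invariant ``uniform \PlainMax across the population'' is ``preserved as long as no reset fires.'' In the actual protocol resets fire by design every $\Theta(\log n)$ time: the \PlainTime of every agent wraps around once per round, every agent regenerates its estimate from $k$ fresh GRVs, and the new population-wide maximum of these $kn$ samples \emph{becomes} the estimate for the next round (this is what makes the protocol adapt to a changing $n$, and it is the clock tick of \cref{thm:clock}). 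The paper's holding argument (\cref{lemma:stays_synced,lemma:round_to_round,lemma:holding_time}) is therefore a round-chaining argument: by \cref{lemma:intervals} all agents are already in the reset phase before the first reset fires, so the old maximum is forgotten; the first resetting agent may sample a tiny GRV, and it is only the trailing estimate \PlainLastMax (which you never use) that keeps its exchange window of length $\Theta(\log n)$ so the reset$\rightarrow$exchange epidemic and the spreading of the new maximum complete before anyone leaves the exchange phase; then \cref{lemma:kn_grvs} gives the new estimate in $[0.5\log n, O(k^2\log n)]$ w.h.p.\ and the union bound is taken over $n^{k-1}$ \emph{rounds}, not over individual resets. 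Without \PlainLastMax and the all-in-reset-phase property your events (i)/(ii) cannot be excluded the way you claim.

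A second concrete gap is that you never address the \PlainInteractions counter, the ``backup'' GRV rule with threshold $\tau'$, and the $20(k+1)$ overestimation. These are not cosmetic: in the convergence proof your assertion that ``within one further exchange cycle a linear number of fresh samples have been generated'' is exactly the step that can fail, because one agent adopting a maximum keeps rewinding the \PlainTime of many others via CHVP and prevents them from ever resetting; the paper needs the backup mechanism to guarantee $\Omega(n)$ fresh GRVs (\cref{lemma:small_m_new_max}) and, conversely, needs the overestimation factor and the bound on \PlainInteractions from \cref{lemma:intervals} to show these backup GRVs do not destroy synchronization (\cref{lemma:new_m_syncs,lemma:stays_synced}). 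Relatedly, your convergence sketch of ``exhausting'' the initial times is not quite the mechanism either: a wrap-around resets \PlainTime to $\tau_1\cdot\max\{\PlainMax,grv\}$ and the old large maximum can keep circulating among exchange-phase agents unless the whole population traverses the phases together and forgets it, which is how \cref{lemma:large_m_new_max} obtains the $O(\log\hat n+\log n)$ bound. Finally, the dynamic adversary is not handled in the paper by slicing the holding window; it follows from uniformity together with the equivalence of dynamic and loosely-stabilizing counting, so your subwindow argument is unnecessary and, as stated, unsupported.
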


We remark that the best known size counting protocols compute $\log n \pm 1$ in the static setting~\cite{DBLP:conf/podc/BerenbrinkKR19}. However, non-uniform protocols typically require only an estimate  $\log\hat n = \Theta(\log n)$ and thus a constant-factor approximation is sufficient for their correctness and performance.

Since our protocol is uniform, it also solves the dynamic size counting problem.
When considering the memory in bits, once our protocol is converged it requires an optimal $O(\log \log n)$ bits per agent, improving upon the $\Omega((\log\log n)^2)$ bits required by \cite{DBLP:conf/sand/DotyE22}.
However, this decrease in space complexity comes at the expense of a larger convergence time when compared to \cite{DBLP:conf/sand/DotyE22}.
In particular, our convergence time depends on $\log\hat n + \log n$ (where $\log \hat n$ is the initial estimate) while their protocol converges in roughly $\log\log\hat n + \log n$ time. This means their protocol is faster when the initial population size is exponentially over-estimated.

\medskip
In addition, our protocol doubles as a loosely-stabilizing uniform phase clock.
We say that an agent receives a signal whenever the agent resets.
Formally, we show the following theorem.

 \begin{theorem}\label{thm:clock}
Let $k \geq 2$ be a constant. Assume that all agents have an estimate of $\Theta(\log n)$ at time $t_0$. Then there exists a constants $c$ and a sequence of time steps $(t_i)_{i \geq n}$ such that every agent ticks exactly once in the interval $[t_i-c\cdot n\log n, t_i + c\cdot n\log n]$ (``burst'') and $t_{i+1} - t_i = \Theta(n \log n)$ with $t_{i+1} - t_i \geq 3c \cdot n\log n$ (``overlap'') for a polynomial number of intervals $i \leq n^{k}$.
 \end{theorem}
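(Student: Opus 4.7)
The plan is to leverage the invariants established in the proof of \cref{theorem:1} to reduce the clock analysis to two well-understood sub-processes: CHVP-style synchronization of the \PlainTime variable, and epidemic-style propagation of the reset signal. Under the hypothesis of the theorem, the same analysis that underlies \cref{theorem:1} guarantees that, within $O(\log n)$ parallel time from $t_0$, all agents agree on a common value $m \in \Theta(\log n)$ of \PlainMax, and that this common value persists without spurious resets caused by the rule ``$u \notin \I_{exchange} \land \Max u \neq \Max v$'' throughout the polynomial interval in question w.h.p. I would then define $t_i$ as the interaction index of the first natural wrap-around (the first agent with $\Time u \leq 0$) in the $i$-th reset cycle after $t_0$, so that cycles are naturally indexed by reset events.

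The first main step is the burst bound. Fix a cycle and consider the interactions around $t_i$. The update $\Time u \gets \max\{\Time u, \Time v\} - 1$ is exactly one step of the CHVP protocol from \cite{DBLP:conf/wdag/SudoEIM21}, whose concentration analysis implies that within $O(n \log n)$ interactions after every agent enters a cycle with $\Time u = \tau_1 \cdot m$, the spread of \PlainTime across the population is at most $O(\log n)$ w.h.p. Hence once the first agent's countdown reaches zero at $t_i$, every other agent's \PlainTime is at most $O(\log n)$, and each such agent wraps around naturally within $O(n \log n)$ further interactions. Agents that happen to have entered the reset phase just before $t_i$ are instead caught by the trigger $u \in \I_{reset} \land v \in \I_{exchange}$: once a single agent is in the exchange phase, this rule behaves as a one-way epidemic, reaching all remaining agents in $O(n \log n)$ interactions w.h.p. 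Together these bounds yield a constant $c$ such that every agent ticks exactly once in $[t_i - c n\log n,\, t_i + c n \log n]$ with probability $1 - n^{-\Omega(k)}$.

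The second main step is the overlap bound. Immediately after resetting, every agent sets $\Time u = \tau_1 \cdot m$, and the decrement-plus-maximum rule forces any single countdown to take at least $\Omega(\tau_1 \cdot m)$ parallel time to reach zero, i.e.\ $\Omega(n \log n)$ interactions; this gives the required lower bound on $t_{i+1} - t_i$. A matching upper bound of $O(n \log n)$ follows again from the CHVP concentration bound applied to the descending countdown, establishing $t_{i+1} - t_i = \Theta(n \log n)$. By choosing the protocol constants so that $\tau_2 - \tau_3$ exceeds a suitable multiple of $c$, the entire hold phase contributes more than $3c \cdot n \log n$ interactions during which no agent can be in $\I_{exchange}$ and no wrap-around is possible, so $t_{i+1} - t_i \geq 3c \cdot n \log n$. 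A union bound over the at most $n^k$ cycles then extends each per-cycle w.h.p.\ statement to the entire polynomial horizon.

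I expect the tight burst bound of the first step to be the main obstacle. A naïve sequential composition of the CHVP gap bound and the reset epidemic would only yield an $O(n \log^2 n)$ window, since each sub-process individually consumes $\Theta(n \log n)$ interactions. Obtaining the sharper $O(n \log n)$ burst requires arguing that these two sub-processes overlap in time rather than compose sequentially: an agent whose \PlainTime propagation lags behind is precisely an agent that is likely to be caught by the reset epidemic before its own countdown would expire. Making this coupling precise, and extracting exponential tails strong enough to survive the union bound over $n^k$ cycles, is where essentially all of the technical work resides; the overlap bound and the stability of $\PlainMax$ are then comparatively routine given the machinery already developed for \cref{theorem:1}.
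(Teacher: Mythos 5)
Your proposal follows essentially the same route as the paper: reduce the clock statement to the synchronized-round machinery — the interval constants of \cref{lemma:intervals}, the CHVP bounds of \cref{lemma:chvp_max,lemma:chvp_min}, the reset-to-exchange epidemic, and a union bound over rounds as in \cref{lemma:round_to_round,lemma:holding_time} — which is exactly how the paper proves \cref{thm:clock}. One correction: the obstacle you single out as the crux is illusory, since the CHVP lag and the reset epidemic compose additively, so two sub-processes of length $\Theta(n\log n)$ still yield an $O(n\log n)$ burst window (merely with a larger constant $c$), and no coupling or time-overlap argument between them is needed.
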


\algrenewcommand\algorithmicthen{}

\begin{algorithm*}[t]
\caption{DynamicSizeCounting($u, v$)}
\label{alg:dynamic_size_counting}
\begin{algorithmic}[1]
\Function{DynamicSizeCounting}{$u, v$}
\If{$\Time{u} \leq 0$ \Comment{wrap-around}\label{dyn:phase_trans_cond}\\
\hspace{2em}\OR ($u \in \I_{reset} \AND v \in \I_{exchange}$) \Comment{reset $\rightarrow$ exchange}\\
\hspace{2em}\OR ($u \notin \I_{exchange} \AND \Max u \neq \Max v$) \textbf{then}\Comment{hold $\rightarrow$ exchange}
}
    \State $grv \gets 20(k+1) \cdot GRV(k)$
    \State $(\Time u, \Interactions u, \Max u, \LastMax u) \gets (\tau_1 \cdot \max\{\Max u, grv\}, 0, grv, \Max u)$ \label{dyn:reset_with_sample}
\EndIf

\vspace{1ex}
\If{$\Interactions u > \tau' \cdot \max\{\Max u, \LastMax u\}$} \Comment{``backup'' GRV generation} \label{dyn:backup_grv_start}
    \State $(\Interactions u, grv) \gets (0, GRV(k))$ \label{dyn:backup_grv} 
    \If{$grv > \Max u$} \label{dyn:backup_grv_cond}\Comment{reset if larger than overestimated $\PlainMax$}
        \State $(\Time u, \Max u) \gets (\tau_1 \cdot 20(k+1) \cdot grv, 20(k+1) \cdot grv)$ \label{dyn:backup_reset}
    \EndIf
\EndIf
\vspace{1ex}

\If{$u, v \in \I_{exchange} \AND \Max u < \Max v$} \Comment{exchange maximum} \label{dyn:exchange_cond}
    \State $(\Time u, \Max u, \LastMax u) \gets (\tau_1 \cdot \Max v, \Max v, \LastMax v)$ \label{dyn:exchange}
\EndIf

\vspace{1ex}

\If{$\Max u = \Max v \AND (u \times v) \notin (\I_{exchange} \times \I_{reset})$} \Comment{exchange last maximum} \label{dyn:exchange_last_max_cond}
    \State $\LastMax u \gets \max\{\LastMax u, \LastMax v\}$ \label{dyn:exchange_last_max}
\EndIf

\State $(\Time u, \Interactions u) \gets (\max\{\Time u, \Time v\} - 1, \Interactions u + 1)$ \Comment{CHVP}\label{dyn:chvp}
\EndFunction
\end{algorithmic}
\end{algorithm*}

\section{Dynamic Size Counting Protocol}

\label{chapter:3}
In this section we present the details of our protocol for the loosely-stabilizing size counting problem.
Our protocol is formally specified in \cref{alg:dynamic_size_counting}.
Every agent $u$ stores four variables: \Max u, \LastMax u, \Time u, and \Interactions u.
For newly added agents these variables are initialized 
to $\mathsf{max} = \mathsf{lastMax} = 1$, $\mathsf{time} = \tau_1$, and $\mathsf{interactions} = 0$.
The agents use the variable $\mathsf{max}$ to spread the maximum of GRVs using epidemics.
The $\mathsf{time}$ is synchronized via CHVP.
In contrast, the $\mathsf{interactions}$ variable is not exchanged; it measures the interactions an agent has had since its last reset.

All agents progress through three phases, \emph{exchange}, \emph{hold}, and \emph{reset}, in that order.
For each agent~$u$ we define these phases by dividing $\Time u$ into intervals that depend on the agent's estimate.
In a configuration $C$, agents are in at most one of the following sets corresponding to the respective phases:
\begin{align*}
\I_{exchange} &= \{v \in V: \Time{C(v)} \geq \tau_2  \cdot \Max{C(v)}\}, \\
\I_{hold} &= \{v \in V: \tau_2 \cdot \Max{C(v)} > \Time{C(v)} \geq \tau_3 \cdot \Max{C(v)}\}, \\
\I_{reset} &= \{v \in V: \tau_3 \cdot \Max{C(v)} > \Time{C(v)} \geq 0\}.
\end{align*}
If all agents share the same maximum value $M$ defined as $M = \max_{v \in V}(\max\{\Max {C(v)}, \LastMax{C(v)})\}$, the length of the exchange phase is $|\I_{exchange}| = (\tau_1 - \tau_2) \cdot M$,
the length of the hold phase is $|\I_{hold}| = (\tau_2 - \tau_3) \cdot M$, and
the length of the reset phase is $|\I_{reset}| = \tau_3 \cdot M$.
Here, the $\tau_{i}$ denote large enough constants which we will specify later in \cref{lemma:intervals}.

When agents reset, they generate new GRVs.
The goal of the exchange phase is to spread the maximum of these GRVs to all agents via epidemics.
The hold phase functions as a separator between the exchange phase and the reset phase; its goal is to ensure that an agent that has just left the exchange phase does not immediately reset.
The reset phase is used to launch the agents into the next round while minimizing the spread of the agents' $\mathsf{interactions}$ variables.

Lines \ref{dyn:phase_trans_cond} to \ref{dyn:reset_with_sample} of \cref{alg:dynamic_size_counting} implement the phase transition leading to a reset.
A reset is formally defined as an interaction where the agent sets its $\PlainMax$ to another GRV, its $\PlainTime$ to a multiple of the previous maximum or the new GRV, and its $\PlainInteractions$ to zero.
It follows that resetting agents always enter the exchange phase.
During the exchange phase, agents will adopt larger $\PlainMax$ values without generating new GRVs (see Lines \ref{dyn:exchange_cond},\ref{dyn:exchange}).
In any other cases, resets always lead to new GRVs (see Lines \ref{dyn:reset_with_sample},\ref{dyn:backup_grv}).

If an agent adopts a new maximum, it resets its $\PlainTime$. This is detected by other agents using CHVP (see Line \ref{dyn:chvp}).
If the adoption happened recently and the agent is still in the exchange phase, it stays in the exchange phase.
When agents have not detected a new maximum for a long time, they progress through the hold and reset phases.
Eventually, the agents reset back to the exchange phase.
At this point, resetting agents replace their $\PlainMax$ variable \mbox{with a new GRV.}

Most agents' newly sampled GRVs will be much smaller than $\log n$.
To keep the population synchronized, the agents store a ``trailing'' estimate $\PlainLastMax$.
If the new GRV is larger, they will use it to define the phase lengths.
Otherwise, they will use the last rounds' $\PlainMax$ value (see Line \ref{dyn:reset_with_sample}).
As we will see, this ensures that the phases of all agents stay sufficiently large.

In CHVP agents always adopt larger $\PlainTime$ values of fellow agents (minus 1).
This leads to a problem: 
One agent adopting a $\PlainMax$ might prevent many others storing the same $\PlainMax$ from resetting, and thus also from generating new GRVs.
To prevent this and to guarantee $\Omega(n)$ new GRVs in $O(M + \log n)$ time, we use the $\PlainInteractions$ count to detect such situations.
This effectively guarantees that every agent generates one GRV in $O(M)$ time.
This is described in Lines \ref{dyn:backup_grv_start} to \ref{dyn:backup_reset} of \cref{alg:dynamic_size_counting}.
To preserve synchronization, however, agents will only adopt this ``backup-GRV'' when it is an order of magnitudes larger than their current maximum GRV (see Line \ref{dyn:backup_grv_cond}).
To this end agents ``overestimate'' the saved $\PlainMax$ by $20 (k+1)$ (see Lines \ref{dyn:reset_with_sample},\ref{dyn:backup_reset}).
We use the constant $\tau'$ to control the number of interactions any agent should have before generating a ``backup-GRV''.
We define $\tau'$ in \cref{lemma:intervals}.
Finally, Lines \ref{dyn:exchange_cond} to \ref{dyn:chvp} describe the logic for exchanging $\PlainMax$, $\PlainLastMax$, and $\PlainTime$ values.

\medskip

The intuition for the recovery of our clock from an arbitrary state is the following. As the population starts with an arbitrary maximum $\log \hat{n}$ stored by some agents, the initial maximum might be quite large.
Thus, to generate a new accurate maximum, $\log \hat{n}$ is \emph{forgotten} in $O(\log \hat{n})$ time.
Agents \emph{forget} a maximum when they overwrite it with a different one.
When the entire population forgets a maximum, no agents store it.
Initially, a large $\log \hat{n}$ will dominate the epidemic and immediately replace all other estimates.
Only after the population forgets $\log \hat{n}$, accurate estimates of $\log n$ can spread to the entire population.
Once a new maximum is generated, the agents synchronize and move through the phases together.
They then generate new GRVs, leading to a new valid estimate of $\log n$.
At this point, the population has converged.
This cycle is then repeated again and again, leading to a polynomial holding time w.h.p.
Recall that the phases are defined by time metrics that each agent stores, combined with their current estimate.
Once converged, the phases thus have length $\Theta(\log n)$.

\paragraph{Geometrically Distributed Random Variables}
As in \cite{DBLP:conf/sand/DotyE22} we assume for the sake of our analysis that agents can generate GRVs.
However, this is not a strong assumption.
Indeed, the process of generating one GRV can be split up into multiple interactions, each consisting of one coin flip.
This would allow us, after some warm-up phase, to use synthetic coins as introduced in \cite{DBLP:conf/soda/AlistarhAEGR17}.

To achieve high probability bounds on the maximum of m, Lines~\ref{dyn:phase_trans_cond} to~\ref{dyn:reset_with_sample} of \cref{alg:dynamic_size_counting} implement the previously defined phase transitions leading to a reset.
A reset is defined as the agent setting their $\PlainMax$ to another GRV, their $\PlainTime$ to a multiple of the previous maximum or the new GRV, and their $\PlainInteractions$ to zero.
Thus, resetting agents will always enter the exchange phase.
Agents already in the exchange phase will adopt larger $\PlainMax$ values without generating new GRVs (see Lines \ref{dyn:exchange_cond}, \ref{dyn:exchange}).
In any other cases, resets always lead to new GRVs (see Lines \ref{dyn:reset_with_sample},\ref{dyn:backup_grv}).
As we will see in \cref{lemma:kn_grvs}, $k$ is an arbitrary constant that ultimately allows us to control the error probability of our protocol.
For the sake of simplicity, we will assume that each agent can generate these in one interaction instead of $k$ subsequent interactions.
As $k$ is constant, this does not affect the asymptotic running time complexity.
For completeness, an algorithm that generates the maximum of $k \cdot n$ GRVs is given in \cref{apx:omitted-proofs}.

\section{Analysis}
\label{chapter:5}
In this section we prove \cref{theorem:1,thm:clock}.
Note that we did not optimize the constants for our proofs, and some constants that we use in our analysis are quite substantial (see, e.g., \cref{lemma:intervals}).
However, our empirical analysis in \cref{sec:simulations} shows that the protocol works well with much smaller constants.
We start our analysis in \cref{sec:prelim} with basic observations and notation after which we present in \cref{sec:tools} fundamental building blocks and tools that we use throughout the remainder of this paper.
Then we analyze the convergence time in \cref{sec:conv} and the holding time in \cref{sec:holding}.

\subsection{Preliminaries} \label{sec:prelim}

\paragraph{Global Maximum.}
Agents in the reset phase will reset when interacting with any agent in the exchange phase, no matter what their $\PlainMax$ is.
So in the reset phase, $\PlainMax$ values do not impact the protocol.
In contrast, agents in the wait phase will only reset when interacting with agents in the exchange phase if their $\PlainMax$ differs.
Thus, at any time we define a \emph{global maximum} $M$ as the largest $\PlainMax$ stored by any agent in the exchange or wait phases.
We say a \emph{new global maximum} $M'$ is generated if any agent generates a new GRV such that ${M' > M}$.
A maximum is \emph{forgotten} if it is present only in agents in the reset phase or no agents at all.

Note that the $\PlainMax$ and $\PlainLastMax$ values may differ.
We define all phases using whichever is larger and in the following all formal statements refer to the maximum of $\PlainMax$ and $\PlainLastMax$.
When using $M$ in the context of time complexity, we mean the largest $\PlainMax$ or $\PlainLastMax$ value present in agents with this maximum.
The expression \emph{relatively quickly} means a time complexity of $O(M)$. As we will see, this becomes $O(\log n)$ after some initial convergence time w.h.p.

\paragraph{Synchronized population.}
Our protocols resemble a clock divided into three phases.
The goal of this clock is to synchronize agents around the clock face.
As the clock uses all three variables, our definition of synchronicity also depends on all those variables.
We define a \emph{synchronized population} as a population in a configuration where every agent stores the same value \[\PlainMax, \PlainLastMax \in [0.5 \log n, 40 (k+1)^2 \log n].\]
Furthermore, all agents have to be either in $\I_{exchange} \cup \I_{wait}$ or in $\I_{wait} \cup \I_{reset}$, and all agents must have $\PlainTime < \tau_1 M$.

Lastly, to avoid new ``backup'' GRVs, the $\PlainInteractions$ count must not be too high.
How high exactly depends on how far the population has progressed through the intervals.
Before all agents store the same maximum, the infection process is akin to an epidemic.
Let $T$ denote the time required for an epidemic to finish w.h.p.\ according to \cref{lemma:epidemics}.
Afterward, the agents progress through the three phases.
Let $T'$ denote the time required until the maximum detection value would drop below zero according to \cref{lemma:intervals}.
We can bound how much the $\PlainInteractions$ values may differ depending on these two variables:
At time $t \in [0, T + T']$, no agent may have ${\PlainInteractions > 2t(1+\sqrt{{k}/{t}}) \log n}$.
Thus, when an agent generates a sufficiently large new global maximum that infects all agents, the population synchronizes.
This follows from the definition of \cref{alg:dynamic_size_counting} and is formalized below.

\paragraph{Round.}
The transition from the reset phase back to the exchange phase is explicitly excluded from the synchronicity definition.
When agents reset, they might generate new GRVs smaller than $0.5 \log n$.
However, when the population is synchronized, they will re-synchronize without resetting a second time, in the time required by two epidemics w.h.p.
We call this entire process one \emph{round}.
A round starts at the first synchronized configuration.
It includes the last synchronized configuration and extends until the last interaction before the population synchronizes again.
In this unsynchronized period, each agent must reset exactly once w.h.p.
Thus, once the population is synchronized, agents will directly continue from one round to the next.
This is formalized in \cref{sec:holding} where we analyze the holding time of our dynamic size counting protocol.

\subsection{Toolbox} \label{sec:tools}

In this section, we provide a collection of properties and tools, most of which are from related works.
These properties are then used to define and analyze our protocol rigorously.

\paragraph{Maximum of Geometric Random Variables}

Whenever we refer to GRVs we mean random variables with geometric distribution $\Geom(1/2)$.
It is folklore that the maximum of $n$ independent and identically distributed GRVs is concentrated around $\Theta(\log n)$, see, e.g., Lemma D.7 in \cite{DBLP:conf/podc/DotyE19}.
The following result is a straight-forward extension.
We give the proof in \cref{apx:omitted-proofs}.

\begin{lemma}[restate=restatekngrvs,label={lemma:kn_grvs}]
Let $k \geq 1$ be an arbitrary constant with $k \leq n$ and $n \geq 50$, and let $G = \{G_1, G_2, \dots, G_{k\cdot n}\}$ be a set of $k\cdot n$ i.i.d.\ random variables with distribution $\Geom(1/2)$. Define  $M = \max\{G\}$.
Then \[
Pr[0.5 \log n \leq M \leq 2(k+1) \log n] \geq 1 - O(n^{-k}).
\]
\end{lemma}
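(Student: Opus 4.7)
\medskip

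\textbf{Proof plan.} The plan is to prove the two bounds separately via standard tail estimates, using the identity $\Pr[G \geq t] = (1/2)^{t-1}$ for $G \sim \Geom(1/2)$ (with the ``number of trials'' convention suggested by the introduction).

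\textbf{Upper bound via union bound.} First I would bound $\Pr[M \geq 2(k+1) \log n]$. By a union bound over the $kn$ variables,
\[
\Pr[M \geq 2(k+1)\log n] \leq kn \cdot \Pr[G_1 \geq 2(k+1)\log n] \leq kn \cdot 2^{-(2(k+1)\log n - 1)} = 2k \cdot n^{-(2k+1)}.
\]
Since $k \geq 1$, this is $O(n^{-k})$ (in fact much smaller), which handles the upper tail.

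\textbf{Lower bound via independence.} Next I would bound $\Pr[M < 0.5 \log n]$. Using independence,
\[
\Pr[M < 0.5 \log n] = \bigl(\Pr[G_1 < 0.5 \log n]\bigr)^{kn} = \bigl(1 - \Pr[G_1 \geq 0.5 \log n]\bigr)^{kn} \leq \left(1 - \tfrac{2}{\sqrt{n}}\right)^{kn} \leq e^{-2k \sqrt{n}},
\]
where I used $\Pr[G_1 \geq 0.5 \log n] = 2^{-(0.5\log n - 1)} = 2/\sqrt{n}$ and the standard inequality $1 - x \leq e^{-x}$. This super-polynomial bound is easily $O(n^{-k})$ for $n \geq 50$.

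\textbf{Combining.} A final union bound over the two tail events yields
\[
\Pr[0.5 \log n \leq M \leq 2(k+1)\log n] \geq 1 - 2k \cdot n^{-(2k+1)} - e^{-2k\sqrt{n}} = 1 - O(n^{-k}),
\]
completing the claim. There is no real obstacle here: the argument is elementary tail bounds for the geometric distribution combined with independence and a union bound. The only subtlety is matching the constants in the interval endpoints $0.5$ and $2(k+1)$ to the exponents obtained from $\Pr[G_1 \geq t] = 2^{-(t-1)}$, which just requires tracking the off-by-one factor of $2$ in the tail probability; the hypotheses $n \geq 50$ and $k \leq n$ comfortably absorb this constant as well as any lower-order slack.
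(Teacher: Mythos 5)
Your proof is correct, but it takes a different route from the paper. The paper does not work with the geometric tail directly; instead it treats the known bound for the maximum of $n$ GRVs (Lemma D.7 of Doty and Eftekhari, restated as \cref{lemma:n_grvs}) as a black box and reduces the $k\cdot n$ case to it: for the lower tail it partitions the $k\cdot n$ variables into $k$ independent blocks of $n$ and multiplies the per-block failure probabilities to get $(n^{-1})^{k}=n^{-k}$, and for the upper tail it embeds the $k\cdot n$ variables into a larger family of $k\cdot n^{k}$ GRVs, applies the lemma to that family, and simplifies $2\log(k\cdot n^{k})\le 2(k+1)\log n$. Your direct argument --- union bound over the $kn$ variables for the upper tail and independence for the lower tail --- is more elementary, avoids the somewhat artificial $k\cdot n^{k}$ padding, and in fact yields sharper bounds ($O(k\,n^{-2k-1})$ for the upper tail and $e^{-\Omega(k\sqrt{n})}$ for the lower tail, versus the paper's $O(n^{-k})$ on both sides); the paper's version buys brevity by reusing the cited lemma. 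One small point to tidy up, which you already flag: $\Pr[G_1\ge 0.5\log n]$ equals $2^{-(\lceil 0.5\log n\rceil-1)}$, so for non-integer $0.5\log n$ you should only claim $\Pr[G_1\ge 0.5\log n]\ge 1/\sqrt{n}$ (not exactly $2/\sqrt{n}$), giving $(1-1/\sqrt{n})^{kn}\le e^{-k\sqrt{n}}$; this changes nothing in the conclusion.
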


\paragraph{Epidemics}
In epidemics, agents store a single value and adopt the maximum of any agent's value they encounter.
Formally, the transition rule reads
$(u, v) \rightarrow (\max\{u,v\}, v)$.
Assume that initially one agent is in state $1$ and $n-1$ agents are in State $0$.
It is folklore that within $O(n \log n)$ interactions every agent has state $1$ w.h.p., see, e.g., Lemma 2 in \cite{DBLP:journals/dc/AngluinAE08a}.
The following variant allows us to control the error probability by providing a high-probability bound in terms of $k$ (see, e.g., Lemma 4 in \cite{DBLP:conf/sand/BerenbrinkBHK22}).

\begin{lemma}
    \label{lemma:epidemics}
    Let $k$ be any positive constant and let $M$ be defined as $M = \max_{v \in V}\{C_0(v)\}$.
    After at most $t \leq 4(k+1) n \log n$ interactions every agent is in state $M$ with probability at least $1 - O(n^{-k})$.
\end{lemma}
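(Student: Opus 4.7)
The plan is to track $I_t$, the number of agents holding the maximum value $M$ after $t$ interactions. Since an agent's stored value only ever moves upward (toward $M$), $I_t$ is non-decreasing, and the lemma reduces to upper-bounding the first $t$ for which $I_t = n$. Conditional on $I_t = i$, the probability that the interaction increments $I_t$ equals $\frac{2 i (n-i)}{n(n-1)}$, so the waiting time between consecutive increments is stochastically dominated by a geometric variable whose parameter I will read off in each regime.

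I would split the analysis into two standard regimes. In the \emph{growth phase}, $I_t$ moves from $I_0 \geq 1$ to at least $\lceil n/2 \rceil$: here $n - i \geq n/2$, so the infection probability per interaction is at least $i/n$, and the hop from $i$ to $i+1$ is dominated by $\Geom(i/n)$. In the \emph{coverage phase}, $I_t$ moves from $\lceil n/2 \rceil$ to $n$: each specific remaining uninfected agent is paired with an infected one with probability $\geq 1/n$ per interaction, so the time until that specific agent is infected is dominated by $\Geom(1/n)$.

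Concentration in the coverage phase is cleanest via a direct geometric tail plus union bound: for any fixed uninfected agent $v$, $\Pr[\Geom(1/n) > 2(k+1)\, n \ln n] \leq (1 - 1/n)^{2(k+1)\, n \ln n} \leq n^{-2(k+1)}$, and a union bound over the at most $n$ uninfected agents gives failure probability $O(n^{-k})$ within $2(k+1)\, n \log n$ interactions. For the growth phase, rather than tail-bounding a sum of heterogeneous geometrics directly, I would partition it into $\lceil \log_2 n \rceil$ doubling sub-phases: let $T_j$ be the interactions while $2^j \leq I_t < 2^{j+1}$, so $E[T_j] \leq \sum_{i = 2^j}^{2^{j+1}-1} n/i \leq n \ln 2$. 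Converting $T_j$ to a negative-binomial-style sum and applying a Chernoff bound (or equivalently Janson's inequality for sums of independent geometrics) yields $\Pr[T_j \geq c(k) \cdot n]$ at most $n^{-(k+1)}$ for a suitable $c(k)$; a union bound over the $O(\log n)$ sub-phases preserves an $O(n^{-k})$ failure probability and gives a total growth-phase length of at most $2(k+1)\, n \log n$.

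Combining the two phase bounds by a final union bound yields convergence within $4(k+1)\, n \log n$ interactions with probability $1 - O(n^{-k})$, matching the statement. The main obstacle is calibrating the constants so the two phases each fit into half the advertised budget while keeping the failure probability at $O(n^{-k})$; the sharp geometric tail needed in the growth phase (and the fact that the sub-phase expectations are only $O(n)$, leaving a $\log n$ slack for concentration) is what makes this feasible without blowing up the constant beyond $4(k+1)$.
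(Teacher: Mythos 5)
The paper does not actually prove this lemma: it is invoked as folklore, with pointers to Lemma~2 of Angluin, Aspnes, and Eisenstat and to Lemma~4 of Berenbrink et al.\ (SAND~2022) for the version with the tunable exponent $k$. Your proof is the standard two-phase argument (doubling sub-phases up to $n/2$ infected, then a per-agent geometric tail plus union bound for the remainder), and that is exactly the right way to establish the result from scratch; structurally there is nothing wrong with it, and it supplies a self-contained argument where the paper offers only a citation.

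One concrete slip to fix: the epidemic here is \emph{one-way} --- the rule is $(u,v)\rightarrow(\max\{u,v\},v)$, so only the initiator updates. Conditional on $I_t=i$, the probability that an interaction creates a new infected agent is therefore $\frac{i(n-i)}{n(n-1)}$, not $\frac{2i(n-i)}{n(n-1)}$. Consequently the growth-phase hop from $i$ to $i+1$ is dominated by $\Geom\bigl(i/(2n)\bigr)$ rather than $\Geom(i/n)$, and in the coverage phase a fixed uninfected agent is cured with probability at least $1/(2n)$ per interaction rather than $1/n$. This doubles the expected length of both phases, and since you explicitly calibrate each phase to fit into half of the $4(k+1)n\log n$ budget, the arithmetic as written no longer closes: the coverage phase alone then needs roughly $4(k+1)n\ln n\approx 2.8(k+1)n\log n$ interactions. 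The asymptotic statement $O\bigl((k+1)n\log n\bigr)$ with failure probability $O(n^{-k})$ survives untouched, but the specific constant $4$ is used downstream (the proof of \cref{lemma:intervals} relies on $i_1=8n(k+1)m\log n$ accommodating two epidemics), so either redo the calibration with the halved rates and a correspondingly larger constant, or tighten the concentration step in the growth phase to recover the slack.
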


\paragraph{Detection Protocol}
\Textcite{DBLP:conf/wdag/SudoEIM21} use the detection protocol from \cite{DBLP:conf/dna/AlistarhDKSU17} by counting down from a starting value.
The agents exchange the maximum of their values and decrement it by one. 
This happens until their values reach zero.
In our protocol we use a one-sided version of this CHVP process.
The transition rule is defined as \[{(u, v) \rightarrow (\max\{u, v\} - 1, v)}. \]

The following lemma is based on Lemma~3 from \cite{DBLP:conf/wdag/SudoEIM21}, which is in turn based on Lemma~1 from~\cite{DBLP:conf/dna/AlistarhDKSU17}.
We give a detailed overview of one-sided CHVP in \cref{apx:chvp}.

\begin{lemma}[restate=restatechvpmax,label={lemma:chvp_max}]
    Let $m = \max_{v \in V}\{C_0(v)\}$, $\Delta$ be an arbitrary positive integer, and $k$ be a positive constant.
    Any execution of CHVP enters a configuration $C_\tau$ with 
    ${\tau < 7n(\Delta + k \log n)}$%
    , where ${\max_{v \in V}\{C_\tau(v)\} \leq m-\Delta}$ holds with probability at least $1 - n^{-k}$.
\end{lemma}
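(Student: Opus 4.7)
The strategy is to decompose the target $\Delta$-level drop of the maximum into $\Delta$ single-level drops and sum the corresponding hitting times, using a careful amortization across successive drops. Let $L_t$ denote the global maximum at time $t$ and $K_t$ the number of agents currently at value $L_t$. Because only the initiator of an interaction can change its value, and an agent at $L_t$ that is chosen as initiator always drops to $\max(L_t,v_{\mathrm{val}})-1=L_t-1$, the count $K_t$ is non-increasing while $L_t$ stays fixed, and $L_t$ decreases (by at least one) exactly when $K_t$ reaches zero.

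For a single unit drop starting with $K$ agents at the current maximum, the count decreases by one each time the initiator is one of those agents, which happens at every interaction with probability equal to the current count divided by $n$. Hence the number of interactions until $K$ reaches zero is distributed as $\sum_{i=1}^{K}\Geom(i/n)$, a sum of independent geometrics with expectation $n H_K$ and well-concentrated sub-exponential tails. A Chernoff-style tail bound for sums of geometric variables yields an upper bound of $n(H_K + c\,k\log n)$ on a single drop time with failure probability at most $n^{-(k+2)}$ for a suitable absolute constant $c$.

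The heart of the argument is the amortization. Let $K_j$ denote the count at the maximum at the moment $L_t$ first equals $m-j$. The key structural claim is that $K_{j+1}=O(\log K_j + \log n)$ with high probability. Intuitively, while level $L$ drains at rate $K_L/n$ into level $L-1$, the agents that have just arrived at $L-1$ themselves drain further to $L-2$ at rate proportional to their own count, so in the $\Theta(n\log K_L)$ interactions needed to empty $L$, the population at $L-1$ reaches a quasi-steady state of order $\log K_L$. Iterating yields the super-exponentially decaying sequence $K_0\le n$, $K_1 = O(\log n)$, $K_2=O(\log\log n)$, $\ldots$, collapsing to a constant within $O(\log^* n)$ drops. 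Summing the single-drop bounds from the previous paragraph, the first $O(\log^* n)$ drops cost a total of $O(n\log n)$ interactions while each of the remaining $\Delta - O(\log^* n)$ drops costs only $O(n)$. Together this matches the target $O(n(\Delta + \log n))$, and a union bound over at most $\Delta\le n$ drops boosts the per-step failure probability from $n^{-(k+2)}$ to $n^{-k}$.

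The main obstacle is the rigorous justification of the shrinkage inequality $K_{j+1}=O(\log K_j + \log n)$. The joint evolution of two adjacent level counts is not Markovian in those counts alone, because agents at values strictly below $L-1$ can be bumped up to $L-1$ whenever they serve as initiator meeting an $L$-agent, feeding extra mass into the level and slowing the drain; these bumps must be controlled separately. Following the technique of Lemma~3 in~\textcite{DBLP:conf/wdag/SudoEIM21}, one dominates the two-level system by a pair of independent birth-death chains whose drains can be analysed explicitly, and then applies Azuma- or Freedman-type martingale concentration to control the stochastic deviations from the fluid prediction. This is the only place where the full machinery of the cited work is needed; everything else reduces to geometric sums and union bounds.
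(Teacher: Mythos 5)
Your reduction of a single unit drop of the maximum to a coupon--collector sum is sound: since only the initiator changes its value, no agent can be lifted to the current maximum $L_t$, so the count $K_t$ at the top level is non-increasing and the drain time is $\sum_{i=1}^{K}\Geom(i/n)$ with mean $nH_K$. The genuine gap is everything you build on top of that. First, your per-drop tail bound $n(H_K + c\,k\log n)$ carries an additive slack of order $kn\log n$ \emph{per drop}; summed over $\Delta$ drops this alone is $\Omega(kn\Delta\log n)$, a $\log n$ factor over the target $7n(\Delta+k\log n)$, so per-drop w.h.p.\ bounds glued by a union bound cannot give the lemma -- you would need joint concentration of the total drain time across all drops. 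Second, the amortization rests on the unproven shrinkage claim $K_{j+1}=O(\log K_j+\log n)$, and even if it held it does not do what you claim: iterating it leaves the bound stuck at $K_j=\Theta(\log n)$ for all $j\ge 1$ (the $+\log n$ term never vanishes), so the sequence does not collapse to a constant in $O(\log^* n)$ drops, later drops cost $\Theta(n\log\log n)$ rather than $O(n)$, and the total again overshoots $7n(\Delta+k\log n)$ once $\Delta\gtrsim k\log n/\log\log n$ -- exactly the regime in which the paper applies the lemma. Certifying $K_j=O(1)$ per drop with failure probability $n^{-(k+2)}$, which is what the $O(n)$-per-drop claim needs, is not something a Chernoff/Azuma argument can deliver for a fluctuating count; $O(\log n)$ is the realistic w.h.p.\ level. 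You also defer the two-level ``birth--death domination'' (the inflow into level $L-1$ from draining $L$-agents and from bumped-up lower agents) entirely to the technique of Lemma~3 of \cite{DBLP:conf/wdag/SudoEIM21} without carrying it out, and you restrict to $\Delta\le n$, whereas the lemma allows arbitrary $\Delta$ (and the paper uses it with $\Delta$ proportional to the initial, possibly huge, estimate).

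For contrast, the paper avoids level-by-level bookkeeping altogether. It passes to the mirrored count-up process $(x,y)\rightarrow(\min\{x,y\}+1,y)$, defines the potential $\Phi_t=\sum_{v\in V}6^{-C_t(v)}$, shows the one-interaction drift $\mathbf{E}[\Phi_{t+1}\mid\Phi_t]\le\bigl(1-\tfrac{1}{3n}\bigr)\Phi_t$ by a case distinction on whether the two agents' values coincide, and then, after $t=3n\ln(6^{\Delta}n^{k})<7n(\Delta+k\log n)$ interactions, applies Markov's inequality to $\mathbf{E}[\Phi_t]\le 6^{-\Delta}n^{-k}$ to conclude that no agent remains within $\Delta$ of the initial extremum with probability at least $1-n^{-k}$. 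This treats all levels simultaneously, needs no shrinkage or amortization claim, yields the explicit constant $7$, and is valid for arbitrary $\Delta$. To rescue your route you would need either a correct global (amortized) concentration bound on $\sum_j H_{K_j}$ plus a proved structural bound on the $K_j$, or to switch to a drift/potential argument of this kind.
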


In the following lemma we investigate the lower bound of the CHVP variable, which is based on Lemma~4 from \cite{DBLP:conf/wdag/SudoEIM21}.
Initially, the minimum value can be arbitrarily small.
We now show that after $O(\log n)$ time, this minimum will be close to the initial maximum $m$.
To do this, we model the CHVP process as an epidemic starting from $m$.
All agents starting with $m$ start off as infected, with the rest being uninfected.
Once all agents have been infected, their minimum value will be bounded by the number of interactions each agent initiates.

\begin{lemma}[restate=restatechvpmin,label={lemma:chvp_min}]
    Let $m = \max_{v \in V} \{C_0(v)\}$, $\Delta$ be an arbitrary positive integer, and $k \geq 2$ be a constant.
    Any execution of CHVP enters a configuration $C_\tau$ with $\tau = 7n(\Delta + k \log n)$, where $\min_{v \in V} \{C_\tau(v)\} \geq m - 12 (\Delta + k \log n)$ holds with probability at least $1 - n^{-k}$.    
\end{lemma}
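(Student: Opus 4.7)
The plan is to couple the CHVP dynamics with a standard epidemic started from the initial max-holders, and then combine this with a per-agent initiation bound via Chernoff. Writing $L_t(u) = m - V_t(u)$ for the ``loss'' of agent $u$, where $V_t(u)$ is $u$'s CHVP value at time $t$, the CHVP update rule translates into the min-plus recursion $L_t(u) = \min\{L_{t-1}(u), L_{t-1}(v)\} + 1$ when $u$ initiates with responder $v$, and $L_t(u) = L_{t-1}(u)$ otherwise. Initially, every agent that held value $m$ has $L_0 = 0$, so the task reduces to showing $\max_u L_\tau(u) \leq 12(\Delta + k\log n)$ with probability at least $1 - n^{-k}$.

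First, I would establish a short ``epidemic-phase'' bound: after $t_1 = 4(k+1)n\log n$ interactions, every agent satisfies $L_{t_1}(u) \leq c_1$ for some $c_1 = O(k\log n)$ with probability at least $1 - n^{-k}$. The argument couples the spread of low-loss agents with a classical epidemic via \cref{lemma:epidemics}: at each of $O(\log n)$ doubling rounds of the epidemic the number of agents whose loss lies below the current threshold approximately doubles, while the threshold itself grows by only $1$ per round. The structure of this argument mirrors, in spirit, Lemma~4 of~\cite{DBLP:conf/wdag/SudoEIM21}, adapted to the one-sided CHVP rule used here.

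Second, I would bound the number of initiations performed by any single agent over the full interval $[0, \tau]$. Each step selects a fixed agent as initiator with probability $1/n$, so the expected initiation count per agent is $\tau/n = 7(\Delta + k\log n)$; a Chernoff bound followed by a union bound over the $n$ agents then shows that no agent initiates more than $10(\Delta + k\log n)$ times with probability at least $1 - n^{-k}$. Since each initiation raises $L$ by at most $1$, combining the two parts gives $L_\tau(u) \leq L_{t_1}(u) + 10(\Delta + k\log n) \leq 12(\Delta + k\log n)$ uniformly in $u$, which rearranges to the claimed lower bound $V_\tau(u) \geq m - 12(\Delta + k\log n)$.

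The main obstacle will be proving the first step tightly. A naive chain-trace along the infection tree gives only a weaker $O(k\log^2 n)$ bound on $c_1$, because the tree has depth $\Theta(\log n)$ and each intermediate chain member may perform $\Theta(k\log n)$ initiations between being infected and passing its low $L$ to the next agent. Obtaining the tighter $O(k\log n)$ bound requires the doubling-round coupling sketched above, where the loss threshold is updated only once per doubling round rather than per chain hop; once that is in place, the remaining Chernoff/union-bound accounting is routine.
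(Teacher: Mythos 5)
Your overall decomposition (loss recursion, epidemic phase, then a per-agent initiation bound via Chernoff) is reasonable in outline, but the step you yourself flag as the main obstacle is a genuine gap, and it is exactly where the paper's proof does something different. Your claim that in each doubling round of the epidemic the loss threshold ``grows by only $1$'' is unjustified and false as stated: a doubling round spans $\Theta(n)$ interactions, during which some agent initiates $\Theta(\log n/\log\log n)$ interactions w.h.p., so the maximum loss within the low-loss set can grow by much more than $1$ per round; moreover, agents can leave the low-loss set by initiating, so the set is not even monotone. The paper avoids this issue entirely: it couples the process (in its increasing CLVP form) with a modified epidemic in which a newly infected agent adopts its infector's current value, and then unrolls the infection chain. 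The increments contributed by successive chain members occur over disjoint time intervals, each being an initiation event of probability $1/n$ per interaction, so the loss of any fixed agent at time $\tau$ is distributed as a single $\Bin(\tau,1/n)$ --- there is no separate additive ``epidemic-phase'' term at all. That telescoping observation is the missing idea; without it (or an equivalent potential/chain argument) your first step yields only the $O(k\log^2 n)$ bound you concede, which is too weak.

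Even granting a correct $c_1=O(k\log n)$ for the first phase, the constant accounting in your second step does not close. To survive the union bound over $n$ agents at failure rate $n^{-k}$ (say $k=2$), a Chernoff bound on a per-agent initiation count with mean $7(\Delta+k\log n)$ needs relative deviation roughly $0.67$, i.e.\ a bound near $11.7(\Delta+k\log n)$, not $10(\Delta+k\log n)$; this is precisely how the paper's constant $12$ arises, and it leaves essentially no room for an additive $c_1=\Theta(k\log n)$ when $\Delta$ is small. You could recover slack by counting only initiations in $(t_1,\tau]$, but then you must pin down the explicit constant in $c_1$, which again forces the tight chain argument. As written, the proposal therefore does not establish \cref{lemma:chvp_min} with the stated constant, and without the telescoping idea it does not establish it with any constant.
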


For a configuration $C$ we define $C[V] = \{ C(v) \mid v \in V\}$. Then a configuration $C$ of CHVP lies in some interval $I$ if $C[V] \subset I$. With slight abuse of notation, we will omit the $[V]$ and write $C \subset I$ meaning that all values of all agents in $C$ lie in that interval.
In the following lemma, we provide the actual constants for which our analysis holds. Recall that we did not try to optimize these constants but they have been chosen for mere convenience.

\begin{lemma}
\label{lemma:intervals}
    Let $k \geq 2$ be a constant, $\tau_1 = 1140 k$, $\tau_2 = 1119 k$, $\tau_3 = 454 k$, and $\tau' = 4350 k$.
    Then for any $M \geq 0.5 \log n$ and $\max_{v \in V}\{C_0(v)\} = \tau_1 M$, there exist $i_1 < i_2 < i_3 = O(nM)$, such that $C_{i_1} \in (\tau_1 M, \tau_2 M], C_{i_2} \in (\tau_2 M, \tau_3 M], C_{i_3} \in (\tau_3 M, 0]$ before any agent initiated more that $\tau' M$ interactions with probability at least $1 - O(n^{-k})$.
\end{lemma}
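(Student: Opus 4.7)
The plan is to trace the CHVP process from its initial maximum $\tau_1 M$ through three successive sub-phases by combining Lemma~\ref{lemma:chvp_max} (upper bound on the max) with Lemma~\ref{lemma:chvp_min} (lower bound on the min). Because the maximum in CHVP is non-increasing, I get the upper boundary of each target interval essentially for free, while the real work is ensuring that the minimum catches up fast enough so that the entire configuration fits into the interval simultaneously. For the first transition, I would apply Lemma~\ref{lemma:chvp_min} with $\Delta = 0$ to obtain, after $i_1 = 7nk\log n$ interactions, a minimum of at least $\tau_1 M - 12k\log n$; the constants $\tau_1 = 1140k$ and $\tau_2 = 1119k$ together with $M \geq 0.5\log n$ make this exceed $\tau_2 M$, so $C_{i_1}$ lies entirely inside the exchange interval $(\tau_2 M, \tau_1 M]$.

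For the transition from the exchange to the hold interval, I would restart the two CHVP lemmas at configuration $C_{i_1}$. Lemma~\ref{lemma:chvp_max} applied with $\Delta = (\tau_1 - \tau_2) M = 21 k M$ guarantees that after an additional $7n(21 k M + k\log n) = O(nM)$ interactions the max drops below $\tau_2 M$. Applied with the same $\Delta$, Lemma~\ref{lemma:chvp_min} yields a minimum of at least $\tau_1 M - 12(21kM + k\log n) = 888 kM - 12k\log n$, which remains comfortably above $\tau_3 M = 454 kM$ thanks to the large gap $\tau_2 - \tau_3 = 665k$. This delivers $i_2$ with $C_{i_2} \subset (\tau_3 M, \tau_2 M]$. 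The analogous argument with $\Delta = (\tau_1 - \tau_3) M = 686 k M$ yields $i_3 = i_1 + O(nM)$ with $C_{i_3}$ entirely inside the reset interval $(0, \tau_3 M]$, and since $n \log n = O(nM)$ under $M \geq 0.5\log n$, the total satisfies $i_3 = O(nM)$.

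Finally, in these $i_3 = O(nM)$ interactions each agent initiates a $\Bin(i_3, 1/n)$-distributed number of interactions with expectation $O(M)$. Since $\tau' = 4350 k$ is a constant factor above this expectation and $M \geq 0.5\log n$, a standard Chernoff bound followed by a union bound over all $n$ agents shows that no agent initiates more than $\tau' M$ interactions with probability $1 - n^{-k}$. Combining this with the failure probabilities of the three CHVP applications yields the overall $1 - O(n^{-k})$ bound. The main obstacle is the careful arithmetic of simultaneously satisfying the min and max bounds at each $i_j$: the constants $\tau_1, \tau_2, \tau_3$ must be chosen with enough slack that the $12(\Delta + k\log n)$ penalty in Lemma~\ref{lemma:chvp_min} never pushes the lower bound through the floor of the current target interval, uniformly over all $M \geq 0.5\log n$. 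Verifying these inequalities for the explicit constants, and threading the three separate CHVP analyses off of $C_0$ and $C_{i_1}$ without losing the $O(n^{-k})$ concentration, is the book-keeping heart of the proof.
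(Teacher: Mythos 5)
Your overall strategy is the same as the paper's: walk the CHVP configuration through the three intervals by pairing \cref{lemma:chvp_max} (whose conclusion persists because the maximum is non-increasing) with \cref{lemma:chvp_min} at a matched time, and finish with a binomial concentration bound on per-agent initiations (the paper does exactly this via \cref{lemma:agent_participations}). The problems are in the constant bookkeeping, which is really what this lemma is about. For $i_1$ you invoke \cref{lemma:chvp_min} with $\Delta = 0$ (formally not allowed, since $\Delta$ must be a positive integer), and the resulting guarantee is only $\min \geq \tau_1 M - 12k\log n$. You need this to clear $\tau_2 M$, i.e.\ $12k\log n \leq (\tau_1-\tau_2)M = 21kM$, which requires $M \geq \tfrac{4}{7}\log n$; for $M \in [0.5\log n, \tfrac{4}{7}\log n)$ your first containment is not established. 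The paper sidesteps this by letting the window, and hence $\Delta$, scale with $M$: with $i_1 = 8n(k+1)M$ the penalty is $\tfrac{96}{7}(k+1)M \leq 21kM$ for $k \geq 2$, uniformly over all admissible $M$.

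The more serious gap is the third interval. Your ``analogous argument'' with $\Delta = (\tau_1-\tau_3)M = 686kM$ does give $\max \leq \tau_3 M$ via \cref{lemma:chvp_max}, but the matching application of \cref{lemma:chvp_min} yields a lower bound of at best $\tau_1 M - 12(686kM + k\log n)$, which is far below $0$ since $12\cdot 686 k M \gg \tau_1 M = 1140kM$; so $\min \geq 0$ at $i_3$, i.e.\ membership in $(\tau_3 M, 0]$, is simply not proved by your pairing. This is exactly where the factor-$12$ penalty in \cref{lemma:chvp_min} bites and where the paper's constants come from: the paper tracks the minimum segment by segment, applying the min-bound only over the final stretch of length $i_3 - i_2 = 665nkM$, so the additional drop is $12\cdot 95kM = 1140kM$, and it sets $\tau_1 = 1140k$ precisely so that this drop is absorbed. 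Relatedly, when you restart the lemmas at $C_{i_1}$ you anchor the min-bound at $\tau_1 M$, although all that is known about $\max(C_{i_1})$ is that it lies in $[\tau_2 M, \tau_1 M)$; in the second step the slack ($867kM - 12k\log n > \tau_3 M$) saves you, but the same anchoring issue is what makes the last step collapse. So the skeleton matches the paper, but as written the proposal does not verify the stated constants on two of the three containments.
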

\begin{proof}
    We define $m$ such that $M = m \log n$ and $i_1 = 8 n (k+1) m \log n$.
    According to \cref{lemma:chvp_min},
    for $\Delta_1 = (8/7 (k+1) m - k) \log n$, it holds that {
    \medmuskip=2.0mu plus 2.0mu minus 2.0mu
\thinmuskip=2.0mu
\thickmuskip=2.0mu plus 5.0mu
    \begin{align*}
        C_{i_1} &\subset (\tau_1 m \log n, \tau_1 m \log n - 12 (\Delta_1 + k \log n)] \\
        &\subset (\tau_1 m \log n, (\tau_1 - 21 k) m \log n] = (\tau_1 M, \tau_2 M].
    \end{align*}
    }
    Note that $i_1$ allows for two epidemics to complete w.h.p.\ (see \cref{lemma:epidemics}).
    All agents have left the first interval after $i_2 = 400n km \log n \geq i_1 + 7n(21 km + k) \log n $ interactions (see \cref{lemma:chvp_max}).
    Thus, for $\Delta_2 = (400/7 m - 1) k \log n$, it holds that
    \begin{align*}
    C_{i_2} &\subset ((\tau_1 - 21k) m \log n, \tau_1 m \log n - 12 (\Delta_2 + k \log n)] \\
    &\subset ((\tau_1 - 21 k) m \log n, (\tau_1 - 686 k) m \log n] = (\tau_2 M, \tau_3 M].
    \end{align*}
    Lastly, all agents will have left this interval after $i_3 = 1065n km \log n = 665 n km \log n + i_2$ interactions.
    For the lower bound, we now have a concrete desired value: $\min_{v \in V}\{C_{i_3}\} \geq 0$.
    During $i_3 - i_2$ interactions the minimum reaches at most $\tau_1 m \log n - 1140 km \log n$.
    Thus, $\tau_1 = 1140 k$, $\tau_2 = 1119 k$, and $\tau_3 = 454 k$.
    Then 
    \begin{align*}
    C_{i_3} &\subset (\tau_1 m \log n - 1140 k m \log n, 0]  = (\tau_3 M, 0].
    \end{align*}
    As the probability of failure for both \cref{lemma:chvp_min,lemma:chvp_max} is $n^{-k}$, by applying a union bound this holds with probability at least $1 - O(n^{-k})$.
    Lastly, all agents will reach time zero or reset beforehand after at most $i_3 + 7n(454 km + k) \log n \leq 4257 km \log n$ interactions w.h.p.\ (see \cref{lemma:chvp_max}), initiating at most
    \begin{align*}
    4257 k m \left(1+\sqrt{1/(4257m)}\right) \log n 
    &\leq 4350 k m \log n = \tau' M
    \end{align*}
    interactions with probability $1 - n^{-k}$ (see \cref{lemma:agent_participations} in \cref{apx:omitted-proofs}).
\end{proof}

If $M \geq 0.5 \log n$ and $\max_{v \in V}\{C_0(v)\} = \tau_1 M$,
this lemma implies that there exist $\tau_1, \tau_2, \tau_3$ 
such that in $\Theta(M)$ time all agents traverse through intervals 
$(\tau_1 M, \tau_2 M], (\tau_2 M, \tau_3 M], (\tau_3 M, 0]$ 
together w.h.p.
Additionally, $\tau_1 - \tau_2$ is long enough for two epidemics to complete before any agent exits the first interval.
Thus, the $\PlainTime$ propagated via CHVP fulfills the requirements of our clock.

\medskip

This concludes the presentation of the different tools that we use in our protocols.
When applied accordingly, they provide the means to approximate the population size, exchange those estimates, and define synchronized clocks.
Using these building blocks, we will now define our loosely-stabilizing, uniform phase clock to solve the dynamic size counting problem.

\subsection{Convergence Time} \label{sec:conv}
One fundamental observation is that new global maxima lead to a synchronized population.
When applying this to synchronized populations, all agents enter the exchange phase together, thereby effectively forgetting any old maxima.
Thus, the first resetting agent generates a new global maximum.
Similarly, in unsynchronized populations, agents will forget the initial maxima and generate new ones relatively quickly, thus synchronizing the population.
This is formalized in the following lemma.

\begin{lemma}
    \label{lemma:new_m_syncs}
    Let $k \geq 2$ be a constant. When any agent generates a new global maximum $M \geq 0.5 \log n$, it synchronizes the entire population in $O(M + \log n)$ time with probability $1 - O(n^{-k})$.
\end{lemma}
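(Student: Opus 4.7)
The plan is to first propagate the new maximum $M$ via epidemic spreading and then synchronize $\PlainTime$ via CHVP, verifying each clause of the synchronicity definition in turn. Let $u^\star$ denote the agent that generated $M$ at some time $t_0$. By Line~\ref{dyn:reset_with_sample}, immediately after $t_0$ we have $\Max{u^\star} = M$, $\Time{u^\star} = \tau_1 M$, $\Interactions{u^\star} = 0$, and $u^\star \in \I_{exchange}$.

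First I would show that within $O(\log n)$ parallel time every agent has $\PlainMax = M$ with probability $1 - O(n^{-k})$. Two mechanisms drive this: (i)~agents already in $\I_{exchange}$ with a smaller $\PlainMax$ adopt $M$ directly via Line~\ref{dyn:exchange}; (ii)~agents in $\I_{hold}\cup\I_{reset}$ with $\Max{} \neq M$ reset upon first contact with a holder of $M$ through the third clause of Line~\ref{dyn:phase_trans_cond}, enter $\I_{exchange}$, and then fall under (i). In both cases the holders of $M$ form an expanding infected set, and \cref{lemma:epidemics} yields completion in at most $4(k+1) n \log n$ interactions w.h.p. The only way the spread can fail to settle on $M$ is if some resetting agent draws a fresh GRV exceeding $M$. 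Since at most $O(n)$ resets occur in the $O(M+\log n)$ window, \cref{lemma:kn_grvs} bounds every drawn GRV by $2(k+1)\log n$ w.h.p., so after the $20(k+1)$ multiplier in Line~\ref{dyn:reset_with_sample} every candidate maximum lies in $[0,\,40(k+1)^2\log n]$; combined with the hypothesis $M \geq 0.5\log n$ this also places $M$ in the synchronicity range $[0.5\log n,\,40(k+1)^2\log n]$. A parallel argument for $\PlainLastMax$ via Line~\ref{dyn:exchange_last_max} aligns that variable as well.

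Once every agent shares $\PlainMax = M$, the ``$\Max u \neq \Max v$'' branch of Line~\ref{dyn:phase_trans_cond} is dead, and the wrap-around clause cannot trigger within $O(M)$ parallel time because $\Time{}$ starts at $\tau_1 M$ and decreases by at most one per initiated interaction while $\tau_1$ is a large constant. Thus $\PlainTime$ evolves purely under one-sided CHVP with initial maximum $\tau_1 M$ seeded by $u^\star$. Applying \cref{lemma:chvp_min} with $\Delta = O(\log n)$ shows that after an additional $O(nM + n\log n)$ interactions the minimum $\PlainTime$ lies within $O(\log n)$ of the maximum; together with \cref{lemma:chvp_max} this places every agent's $\PlainTime$ in a common subinterval of $(\tau_3 M,\,\tau_1 M]$, so every agent lies in $\I_{exchange}\cup\I_{hold}$ and satisfies $\PlainTime < \tau_1 M$. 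For the $\PlainInteractions$ clause, every agent that was not already in $\I_{exchange}$ with $\PlainMax = M$ at time $t_0$ resets during the spread and restarts its counter at $0$; for the remaining agents, \cref{lemma:agent_participations} bounds the interactions initiated in any $t$-step window by $2t(1+\sqrt{k/t})\log n$ w.h.p., exactly matching the synchronicity requirement.

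The principal obstacle is the circular worry that some GRV drawn inside the convergence window exceeds $M$ and retroactively derails the target of the epidemic. The cleanest resolution is to condition up-front on the high-probability event of \cref{lemma:kn_grvs} covering every GRV generated in the $O(M+\log n)$ window, and then union-bound the failure probabilities of \cref{lemma:epidemics,lemma:chvp_min,lemma:chvp_max,lemma:kn_grvs,lemma:agent_participations}, each contributing at most $O(n^{-k})$. Adding the epidemic time $O(\log n)$ to the CHVP synchronization time $O(M)$ yields the claimed bound of $O(M + \log n)$.
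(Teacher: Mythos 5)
There is a genuine gap: you declare the population synchronized as soon as the epidemic of $M$ has completed and CHVP has concentrated $\PlainTime$, but two clauses of the synchronicity definition cannot be established at that point, and fixing them is exactly what forces the paper's proof to run through full clock traversals. First, $\PlainLastMax$: your ``parallel argument via Line~\ref{dyn:exchange_last_max}'' does not work, because that line only propagates the \emph{larger} of two $\PlainLastMax$ values, so it can never pull an oversized trailing estimate (e.g.\ one inherited from a huge initial $\log\hat n$, which the generating agent writes into its own $\PlainLastMax$ at Line~\ref{dyn:reset_with_sample} and then spreads to every adopter via Line~\ref{dyn:exchange}) down into the required range $[0.5\log n, 40(k+1)^2\log n]$. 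The only way $\PlainLastMax$ shrinks is through a subsequent reset, which happens only after the whole clock has been traversed; this is why the paper's proof lets the population walk through the phases (using \cref{lemma:intervals}), reset to generate exactly $kn$ fresh GRVs, and then run \emph{another} round of length $O(M)$ before both $\PlainMax$ and $\PlainLastMax$ are in range --- and it is also where the $O(M)$ term in the convergence bound really comes from, not from CHVP settling.

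Second, $\PlainInteractions$: your claim that every agent not already holding $M$ in $\I_{exchange}$ ``resets during the spread and restarts its counter at $0$'' is false. Agents that are in $\I_{exchange}$ with a smaller $\PlainMax$ adopt $M$ through Line~\ref{dyn:exchange}, which updates $\PlainTime$, $\PlainMax$, $\PlainLastMax$ but \emph{not} $\PlainInteractions$; their stale counters can trigger the backup-GRV mechanism (Lines~\ref{dyn:backup_grv_start}--\ref{dyn:backup_reset}), which your proof never addresses. The paper spends the core of its argument here: it bounds the number of backup GRVs generated while agents still store $\PlainMax \in O(1)$ (up to $\Theta(kn\log n)$ of them, not $O(n)$ resets as you assert), shows via \cref{lemma:kn_grvs} that their maximum stays below $10(k+1)\log n$, and uses the $20(k+1)$ overestimation together with the definition of $\tau'$ to argue that no backup GRV can displace the spreading maximum or desynchronize the traversal. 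Without this step you cannot conclude that all agents keep $\PlainMax=M$ and traverse the phases together, so the chain epidemic~$\to$~CHVP~$\to$~synchronized does not close as written.
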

\begin{proof}
    As $M$ is new, the generating agent is in the exchange phase with $\PlainTime \geq \tau_1 M$.
    The new maximum will spread via epidemic.
    All infected agents will stay in the exchange phase until the entire population stores the same maximum, as required by the definition of $\tau_1$.
    Once infected, the agents will stay in the exchange phase, with $\PlainMax=M$ unless a new, even larger maximum is generated.
    In the worst case, the final infected agent generates a new, larger GRV $M'$.
    As this would represent the overestimated maximum of $kn$ GRVs, $M' \in [10 (k+1) \log n, 40 (k+1)^2 \log n]$ holds w.h.p.

    However, the $\PlainInteractions$ values might not be synchronized yet, leading to new ``backup'' GRVs.
    Initially, most agents might store $\PlainMax \in O(1)$, leading to $O(n \log n)$ interactions while $M$ still spreads.
    At most $4 (k+1) k n \log n$ GRVs are generated during this time (see \cref{lemma:epidemics}).
    Using \cref{lemma:kn_grvs}, we can provide the upper bound of their maximum as
    \begin{align*}
        \MoveEqLeft 2 (k+1) \log (4 (k+1) k n \log n) \\
        &= 2 (k+1) (2 + \log (k+1) + \log (k) + \log n + \log \log n) \\
        &\leq 2 (k+1) (2 + 4 \log n) < 10 (k+1) \log n
    \end{align*}
    with probability at least $1 - O(n^{-k})$.
    After that point, each agent will store $\PlainMax \in \Omega(M)$.
    Thus, by the definition of $\tau'$, each agent will generate at most $k$ GRVs before the clock is traversed, unless a new maximum is generated.
    The maximum of these GRVs is at most $10 (k+1) \log n$ with probability at least $1 - O(n^{-k})$.
    If the new maximum is indeed larger, it would spread to all agents.
    Since it would have to be larger than $0.5 \log n$ and thus be overestimated to at least $0.5 \cdot 20 (k + 1) \log n = 10 (k+1) \log n$, no subsequent ``backup'' GRVs would be large enough to replace the new maximum and disrupt the synchronization again.
    Either way, all agents will store the same $\PlainMax$ value without a larger one being generated during the clock's traversal w.h.p.

    As all agents now progress through all phases together (see \cref{lemma:intervals}), and have a $\PlainMax \geq 0.5 \log n$, the ${\text{reset} \rightarrow \text{exchange}}$ phase transition spreads via epidemic, while no agent leaves the exchange phase.
    Thus, exactly $kn$ GRVs are generated, with their overestimated maximum being in $[10 (k+1) \log n, 40 (k+1)^2 \log n]$, and spreading to the entire population.
    At this point, the $\PlainInteractions$ values will differ by at most $4 (k+1) \log n (1 + \sqrt{k/(4 (k+1) \log n)})$.
    Note that $\PlainLastMax$ can still be $\omega(\log n)$ here.
    However, as no new ``backup'' GRVs will be generated during the traversal of the clock, all agents will traverse through it together.
    After another such round taking $O(M)$ time, it holds that $\PlainMax, \PlainLastMax \in \Theta(\log n)$.
    Thus, the clock is synchronous.
\end{proof}

In the following, we will show that agents will always generate a new global maximum in $O(M + \log n)$ time.
Together with \cref{lemma:new_m_syncs}, this proves that agents will synchronize within this time.
Starting with an $M$ that is low, the agents will generate a new global maximum in polylogarithmic time.
\begin{lemma}
    \label{lemma:small_m_new_max}
    Let $C_0$ be an arbitrary initialization with $M < 0.5 \log n$.
    Then a new global maximum $M' \geq 0.5 \log n$ will be generated in $O(\log n)$ time w.h.p.
\end{lemma}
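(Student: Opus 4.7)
The plan is to show that within $O(\log n)$ parallel time every agent generates at least one fresh GRV, so that applying \cref{lemma:kn_grvs} to the resulting pool of at least $n$ samples of $GRV(k)$ (i.e., $kn$ independent $\Geom(1/2)$ draws) yields some agent whose draw is at least $0.5 \log n$. That agent then installs an overestimated value of at least $10(k+1)\log n$ as its new $\PlainMax$, producing a new global maximum $M' \geq 0.5 \log n$.

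First I observe that throughout the $O(\log n)$-time window, as long as no new global maximum has yet been created, $M$ remains strictly below $0.5 \log n$: $\PlainMax$ and $\PlainLastMax$ only grow via adoption from another agent (which is capped by the current $M$), or via a reset that installs a strictly larger sampled GRV -- but such a reset would itself produce a new global maximum of size at least $0.5 \log n$ and immediately conclude the lemma. Consequently, the backup threshold at Line~\ref{dyn:backup_grv_start} satisfies $\tau' \cdot \max\{\Max u, \LastMax u\} \leq \tau' \cdot M = O(\log n)$ for every agent $u$ throughout this interval.

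Next, because $\Interactions u$ is incremented on Line~\ref{dyn:chvp} each time $u$ participates in an interaction and is zeroed only by one of $u$'s own reset events, a standard Chernoff bound on participation counts shows that, for a sufficiently large constant $c$, every agent $u$ participates in more than $\tau' \cdot 0.5 \log n$ interactions within $c \log n$ parallel time w.h.p. Hence within this window every agent either (i) crosses the backup threshold and samples a fresh $GRV(k)$ on Line~\ref{dyn:backup_grv}, or (ii) performs a wrap-around or phase-transition reset via Line~\ref{dyn:phase_trans_cond} and samples a fresh $GRV(k)$ on Line~\ref{dyn:reset_with_sample}. In both cases a fresh $GRV(k)$ is generated by $u$ inside the window, so we accumulate at least $n$ independent $GRV(k)$ draws, hence $kn$ independent $\Geom(1/2)$ draws.

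Applying \cref{lemma:kn_grvs} to this collection yields, with probability at least $1 - O(n^{-k})$, an agent $u^\ast$ whose draw $grv$ satisfies $grv \geq 0.5 \log n$. Since $\Max{u^\ast} \leq M < 0.5 \log n \leq grv$, the comparison on either Line~\ref{dyn:backup_grv_cond} or Line~\ref{dyn:reset_with_sample} is satisfied, so $u^\ast$ installs $\Max{u^\ast} \gets 20(k+1) \cdot grv \geq 10(k+1)\log n$, a new global maximum of size at least $0.5 \log n$. A union bound over the participation-count event and the GRV-max event completes the proof. The main delicacy lies in the case split above: one must carefully credit every agent with a GRV drawn \emph{inside} the $O(\log n)$-window rather than one inherited from $C_0$, which is exactly what the $\Interactions u$ accounting in case (i) and the explicit sampling in case (ii) ensure, using the observation that $\Interactions u$ is only ever reset in conjunction with a fresh sample.
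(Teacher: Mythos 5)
Your proof is correct and follows essentially the same route as the paper's (much terser) argument: per-agent interaction counts as in \cref{lemma:agent_participations} force every agent to trigger the backup-GRV mechanism or reset within $O(M) \subseteq O(\log n)$ time, yielding at least $kn$ fresh $\Geom(1/2)$ draws, and \cref{lemma:kn_grvs} then supplies a draw of at least $0.5 \log n$ that gets installed as the new global maximum. One tiny slip: a reset can install a value exceeding the old $M$ yet still below $0.5 \log n$, so it need not conclude the lemma as you claim --- but in that case $M$ remains below $0.5 \log n$ anyway, so your threshold bound and the rest of the argument are unaffected.
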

\begin{proof}
    Observe that in $c \log n$ time, agents have $\Theta(c \log n)$ interactions w.h.p.\ (see \cref{apx:omitted-proofs}).
    Thus, in $O(M) \leq O(\log n)$ time, the ``backup'' GRV generation is triggered for all agents, resulting in at least $k n$ GRVs, with the maximum being at least $0.5 \log n$.
\end{proof}

In the following lemma, we show that starting from large $M$, the agents generate a new global maximum relatively quickly.

\begin{lemma} 
    \label{lemma:large_m_new_max}
    Let $C_0$ be an arbitrary initialization, with $M \geq 0.5 \log n$.
    Then a new global maximum will be generated in $O(M + \log n)$ time w.h.p.
\end{lemma}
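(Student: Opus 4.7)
The plan is to show that the initial large global maximum $M$ is ``forgotten'' within $O(M)$ parallel time, after which the current global maximum has shrunk to $O(\log n)$ and an argument analogous to \cref{lemma:small_m_new_max} produces a new global maximum within an additional $O(\log n)$ time. The argument relies on two cooperating forces: the natural exchange--hold--reset cycle bounded by \cref{lemma:intervals}, and the backup GRV generation triggered by $\Interactions u > \tau' \cdot \max\{\Max u, \LastMax u\}$.

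First, in $O(M)$ parallel time every agent performs two resets w.h.p. By a Chernoff bound each agent participates in $\Omega(M)$ interactions in $O(M)$ parallel time (using $M \geq 0.5\log n$), so either $\Time u$ reaches zero giving a natural reset (by \cref{lemma:intervals} applied to the trajectory of each $\Time$ value) or the backup condition $\Interactions u > \tau' \cdot \max\{\Max u, \LastMax u\} \leq \tau' M$ is satisfied, forcing a reset once a sampled GRV exceeds $\Max u$. After the first reset, $\Max u$ is a fresh scaled GRV of magnitude $O(\log n)$ by \cref{lemma:kn_grvs} applied to the $kn$ GRVs generated across the population, while $\LastMax u$ still stores the old, possibly large pre-reset $\Max u$. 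Since the post--first-reset $\Time u \leq \tau_1 M$, a second reset follows within another $O(M)$ time and overwrites $\LastMax u$ with the now-small first-reset $\Max u$. Hence after $O(M)$ total time every agent has $\Max u, \LastMax u \leq 40(k+1)^2 \log n$ w.h.p., and the current global maximum has shrunk to $O(\log n)$.

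Once the global maximum is $O(\log n)$, we are essentially in the regime of \cref{lemma:small_m_new_max}: the backup mechanism now fires every $O(\log n)$ interactions per agent and produces at least $kn$ fresh GRVs within $O(\log n)$ further time, whose maximum by \cref{lemma:kn_grvs} lies in $[0.5\log n, 2(k+1)\log n]$ w.h.p. The main obstacle will be to argue that this fresh maximum strictly exceeds the shrunken global maximum with high probability, since both values fall in the same concentration range of \cref{lemma:kn_grvs}. I expect to resolve this by exploiting the overestimation factor $20(k+1)$ together with a sharper tail bound obtained by iterating the backup generation over a constant number of rounds within the remaining time window, analogous to the closing argument in the proof of \cref{lemma:new_m_syncs} that rules out backup GRVs disrupting a freshly-established maximum.
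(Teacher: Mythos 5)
Your plan diverges from the paper's proof in a way that opens two genuine gaps. The decisive one is the obstacle you yourself flag at the end: you need a freshly generated maximum to \emph{numerically} exceed the (shrunken) stored maxima, and this cannot be rescued by the overestimation factor or by iterating the backup mechanism. The stored values after your ``shrinking'' step are themselves overestimated batch-maxima in $[10(k+1)\log n,\,40(k+1)^2\log n]$, whereas a backup reset only fires when a \emph{raw} GRV exceeds $\Max u$ (Line \ref{dyn:backup_grv_cond}), an event of probability roughly $n^{-10(k+1)}$ per sample, far too rare for the polynomially many samples available in an $O(\log n)$ window; and comparing a new overestimated batch-maximum against an old one drawn from essentially the same distribution succeeds only with constant probability, not w.h.p. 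The paper never needs numerical dominance: by its definition, the global maximum is the largest $\PlainMax$ held by agents in the exchange or hold phases, and $M$ is \emph{forgotten} once all its holders sit in the reset phase. The paper therefore argues that $M$ spreads by epidemic (\cref{lemma:epidemics}), the whole population traverses the three intervals together (\cref{lemma:intervals}), ends up jointly in the reset phase, and then the \emph{first} resetting agent's GRV is a new global maximum by definition; its required size $\geq 0.5\log n$ comes from the $kn$ GRVs generated during the reset$\to$exchange epidemic (\cref{lemma:kn_grvs}), which is exactly what \cref{lemma:new_m_syncs} needs. (The case where some GRV larger than $M$ appears earlier is handled by invoking \cref{lemma:new_m_syncs} directly, plus an edge case where the $M$-holders never infect anyone.)

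Your first step has a related problem: counting two resets per agent does not show that $M$ is forgotten, because resets are not synchronized in your argument. An agent that resets early and sits in the exchange phase with a small fresh $\PlainMax$ can meet a not-yet-reset agent still carrying $M$ and re-adopt it via Line \ref{dyn:exchange}, rewinding its $\PlainTime$ to $\tau_1 M$ (and large $\PlainLastMax$ values similarly propagate via Line \ref{dyn:exchange_last_max}), so the old maximum can keep circulating indefinitely under per-agent accounting. This is precisely why the paper first establishes that $M$ infects everyone and the population moves through exchange, hold, and reset \emph{together} before anyone resets, so that the collective reset forgets $M$ in one stroke. Also note the backup mechanism cannot be used to force resets while $\Max u$ is large, since without the GRV exceeding $\Max u$ it only regenerates samples; the resets you need must come from $\PlainTime$ reaching zero, i.e., from the CHVP/interval analysis, which again requires the synchronization step you skipped.
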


The idea is as follows.
Agents storing $M$ will infect other agents, thereby triggering an epidemic and almost synchronizing the population, except for their $\PlainInteractions$ values.
The agents will then end up in the reset phase together and eventually reset, leading to the population forgetting $M$.
If a new, larger maximum is generated at any time during this process, this will synchronize the population directly.

\begin{proof}
    For the sake of simplicity, we will assume that at least one agent storing $M$ infects another agent.
    While this is no prerequisite, $M$ provides an upper bound in the time complexities for all maxima at least $0.5 \log n$.
    
    We require this specific distinction because of one edge case.
    Assume a subset of agents are initialized with $M$, such that almost all are in the reset phase.
    However, the rest are $O(1)$ interactions away from entering it as well.
    Depending on how close those agents are to the reset phase and how many agents with $M$ are in the reset phase already, the probability of them infecting any agents with $\PlainMax < M$ diminishes.
    When no new agent is infected with such a $\PlainMax$, all agents storing $\PlainMax \geq 0.5 \log n$ will traverse through all phases in $O(M)$ time.
    Afterward, by \cref{lemma:small_m_new_max}, they will synchronize in $O(\log n)$ time.
    
    As we assume one agent will be infected with $M$ and thus have $\PlainTime=\tau_1 M$, this will trigger an epidemic spreading of $M$.
    All agents will be infected in ${O(\log n) \leq O(M)}$ time (see \cref{lemma:epidemics}) and then progress through the phases almost synchronously.
    We say almost because the $\PlainInteractions$ values are not yet synchronized.
    \Cref{lemma:new_m_syncs} holds if at any time agents generate GRVs larger than $M$.
    Then, the population synchronizes relatively quickly.
    Otherwise, the population will end up together in the reset phase, as required by the definition of $\tau_3$ (see \cref{lemma:intervals}).
    Once the first agent resets, they will trigger the ${\text{reset} \rightarrow \text{exchange}}$ exchange phase transition epidemic.
    As all resetting agents will set their $\PlainTime$ to at least $\tau_1 M$, they will not exit this phase before all agents enter it and spread the new maximum.
    As $n$ resetting agents generate $k$ samples each, the new overestimated maximum of those GRVs is in ${[10 (k+1) \log n, 40(k+1)^2 \log n]}$ w.h.p.\ (see \cref{lemma:kn_grvs}).
    Thus, \cref{lemma:new_m_syncs} holds, proving the clock synchronizes.
\end{proof}

We have shown that the agents will generate a new global maximum starting from any initial maxima.
As new global maxima synchronize the population, any arbitrary initialization will become synchronized.

\begin{lemma}
    \label{lemma:convergence_time}
    The population synchronizes in $O(M + \log n)$ time from an arbitrary initialization.
\end{lemma}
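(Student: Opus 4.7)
The plan is to combine the three preceding lemmas (\cref{lemma:new_m_syncs}, \cref{lemma:small_m_new_max}, and \cref{lemma:large_m_new_max}) through a case distinction on the initial global maximum $M$, and then invoke a union bound to obtain the claimed high-probability statement.

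First I would split on the magnitude of the initial global maximum. In the \emph{small regime} $M < 0.5 \log n$, \cref{lemma:small_m_new_max} guarantees that the ``backup'' GRV generation in Lines~\ref{dyn:backup_grv_start}--\ref{dyn:backup_reset} is triggered by every agent within $O(\log n)$ time, producing at least $kn$ fresh GRVs whose overestimated maximum $M'$ exceeds $0.5 \log n$ w.h.p. In the \emph{large regime} $M \geq 0.5 \log n$, \cref{lemma:large_m_new_max} supplies a new global maximum $M'$ within $O(M + \log n)$ time w.h.p. In either case we now have, within $O(M + \log n)$ time, some agent that has just generated a new global maximum $M' \geq 0.5 \log n$.

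Next I would feed $M'$ into \cref{lemma:new_m_syncs}, which states that the appearance of such a new global maximum synchronizes the entire population within an additional $O(M' + \log n)$ time w.h.p. To turn this into a clean $O(M + \log n)$ bound, I need $M' = O(M + \log n)$ (or at least $O(\log n)$) with high probability: this follows because $M'$ is either the overestimated maximum of $kn$ independent GRVs, which is $O(\log n)$ w.h.p.\ by \cref{lemma:kn_grvs}, or a value already bounded by arguments inside the proof of \cref{lemma:large_m_new_max}, where any candidate exceeding $M$ is itself shown to be $O(\log n)$. Summing the two contributions gives the desired total convergence time of $O(M + \log n)$.

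Finally I would handle the probability bookkeeping. Each of the three invoked lemmas fails with probability at most $O(n^{-k})$, so a union bound over the (constantly many) events keeps the total failure probability at $O(n^{-k}) = n^{-\Omega(1)}$, which is the w.h.p.\ statement we need. The main (mild) obstacle is making precise that the intermediate maximum $M'$ that triggers \cref{lemma:new_m_syncs} is itself bounded by $O(\log n)$ w.h.p., so that the second phase costs $O(\log n)$ rather than something depending on the potentially huge original $M$; but this is exactly what the overestimated $kn$-GRV calculation in \cref{lemma:kn_grvs} provides, so the argument goes through cleanly.
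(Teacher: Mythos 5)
Your proposal follows essentially the same route as the paper: a case split on the initial global maximum via \cref{lemma:small_m_new_max} and \cref{lemma:large_m_new_max} to obtain a new global maximum $M' \geq 0.5\log n$ within $O(M + \log n)$ time, followed by an application of \cref{lemma:new_m_syncs} to synchronize the population. Your additional care in bounding the new maximum $M'$ by $O(\log n)$ w.h.p.\ via \cref{lemma:kn_grvs} is a detail the paper's own (very terse) proof leaves implicit, but it does not change the argument.
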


\begin{proof}
    Agents generate a new global maximum in $O(M + \log n)$ time.
    This follows directly from the combination of the time bounds from \cref{lemma:small_m_new_max,lemma:large_m_new_max}.
    Either one of the two cases has to hold for any given $M$.
    By \cref{lemma:new_m_syncs}, such a new global maximum synchronizes the population in $O(M + \log n)$ time.
\end{proof}

\subsection{Holding Time} \label{sec:holding}
For the holding time, we assume the first synchronized configuration of a round as the initialization.
First, we show that starting from this configuration, one round directly continues to the next w.h.p.
Then we apply the union bound to determine the holding time w.h.p.

\begin{lemma}
    \label{lemma:stays_synced}
    Once synchronized, the population stays synchronized until the first agent resets w.h.p.
\end{lemma}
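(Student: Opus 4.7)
The plan is to show that while no agent has reset, every clause of the synchronized-configuration definition is preserved; given this invariant, the lemma follows immediately.

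First, I would observe that before any reset, no agent's $\PlainMax$ or $\PlainLastMax$ can change. Inspecting \cref{alg:dynamic_size_counting}, the only lines modifying these variables are the reset updates on Lines~\ref{dyn:reset_with_sample} and~\ref{dyn:backup_reset} (excluded by hypothesis), the exchange rule on Line~\ref{dyn:exchange} (guarded by $\Max u < \Max v$, which never fires when all $\PlainMax$ agree), and Line~\ref{dyn:exchange_last_max} (a $\max$ of two equal $\PlainLastMax$ values). Hence both variables remain frozen at their common initial values $M$ and $L$, the phase boundaries $\tau_1 M$, $\tau_2 M$, $\tau_3 M$ are constant throughout the round, and the reset guard on Line~\ref{dyn:phase_trans_cond} requiring $\Max u \neq \Max v$ can never fire.

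Next, with $\PlainMax$ frozen, $\PlainTime$ evolves purely by one-sided CHVP (Line~\ref{dyn:chvp}) starting from a configuration with $\PlainTime < \tau_1 M$. I would apply \cref{lemma:intervals} (or its analogous restriction to the second half of the clock, if the synchronized population starts in $\I_{hold} \cup \I_{reset}$): with probability at least $1 - O(n^{-k})$ there exist times $i_1 < i_2 < i_3 = O(nM)$ at which every agent's $\PlainTime$ successively lies in $(\tau_1 M, \tau_2 M]$, $(\tau_2 M, \tau_3 M]$, and $(\tau_3 M, 0]$. Between any two consecutive such epochs the population is contained in either $\I_{exchange} \cup \I_{hold}$ or $\I_{hold} \cup \I_{reset}$, so the second reset guard on Line~\ref{dyn:phase_trans_cond} -- requiring an agent in $\I_{reset}$ to meet one in $\I_{exchange}$ -- also never fires. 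The same application of \cref{lemma:intervals} certifies that no agent initiates more than $\tau' M$ interactions during the round, ruling out the backup-reset path on Lines~\ref{dyn:backup_grv_start}--\ref{dyn:backup_reset} before any wrap-around occurs.

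With all three reset transitions ruled out until the first wrap-around, the invariants on $\PlainMax$, $\PlainLastMax$, $\PlainTime < \tau_1 M$, phase containment, and the $\PlainInteractions$ bound all persist up to that moment. A union bound over the $O(1)$ failure events from \cref{lemma:intervals} and its underlying CHVP bounds (\cref{lemma:chvp_max,lemma:chvp_min}) yields the w.h.p.\ statement. The main obstacle is the second paragraph: showing that CHVP's spread on $\PlainTime$ never pushes the population across an entire phase, so that no agent in $\I_{exchange}$ can ever coexist with one in $\I_{reset}$; this is precisely the quantitative content that \cref{lemma:intervals}'s constants $\tau_1 = 1140k$, $\tau_2 = 1119k$, $\tau_3 = 454k$ are calibrated to provide, so the remainder of the argument is essentially bookkeeping on top of that lemma.
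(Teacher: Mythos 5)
Your proposal is correct and follows essentially the same route as the paper's proof: both reduce the claim to \cref{lemma:intervals}, using the joint traversal of the three intervals to rule out the hold/reset-triggered resets and the $\tau' M$ bound on initiated interactions to rule out backup GRVs, so that all agents keep $M$ until the first wrap-around. Your additional bookkeeping that $\PlainMax$ and $\PlainLastMax$ stay frozen before any reset is left implicit in the paper but is the same argument spelled out in more detail.
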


\begin{proof}
    As $\tau_1, \tau_2, \tau_3$ are defined according to \cref{lemma:intervals}, the agents will gather in each interval w.h.p.
    Meanwhile, no agent will have more than $\tau' M$ interactions w.h.p.\ due to $\tau'$ being defined in accordance with \cref{lemma:intervals}.
    Thereby, no new ``backup'' GRVs will be generated.
    All agents will keep $M$ until the first agent reaches $\PlainTime = 0$ and thus resets.
\end{proof}

The following lemma shows that when a population is synchronized, agents will eventually reset exactly once before becoming synchronized again.

\begin{lemma} 
    \label{lemma:round_to_round}
    Let $C_0$ be a synchronized configuration.
    This synchronization implies a round and will continue directly into another round w.h.p.
\end{lemma}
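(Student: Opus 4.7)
The plan is to leverage the three preceding lemmas---\cref{lemma:stays_synced}, \cref{lemma:new_m_syncs}, and \cref{lemma:intervals}---to argue that a synchronized population completes exactly one "reset cascade" and then immediately re-enters a synchronized configuration. I would structure the argument as a chronological walk through one round, tracking the four state variables $\PlainMax$, $\PlainLastMax$, $\PlainTime$, and $\PlainInteractions$ and showing at each step that their values stay within the constraints of the synchronized-population definition (or return to them by the next synchronized configuration).

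First, I would invoke \cref{lemma:stays_synced} to conclude that, starting from the synchronized $C_0$, no agent resets until some first agent $u^\star$ hits $\PlainTime=0$, and that up to that moment all other agents still store the common $M \in [0.5\log n, 40(k+1)^2\log n]$ and have had fewer than $\tau' M$ interactions, so no backup GRV has fired. Next, once $u^\star$ resets, I would argue that the reset $\rightarrow$ exchange transition spreads by epidemic: every other agent is in $\I_{reset}$ or about to enter it (by synchronicity and the definition of $\tau_3$ in \cref{lemma:intervals}), so contact with a freshly reset agent in $\I_{exchange}$ forces a reset of its own before $\PlainTime$ drifts far. Because $\tau_1$ is large enough for two full epidemics to complete inside the exchange interval, the epidemic completes with every agent still in $\I_{exchange}$ w.h.p.

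At that moment all $n$ agents have freshly reset, so $kn$ fresh GRVs are drawn. By \cref{lemma:kn_grvs} their overestimated maximum lies in $[10(k{+}1)\log n,\, 40(k{+}1)^2\log n]$ with probability $1-O(n^{-k})$, hence is $\geq 0.5\log n$ and constitutes a new global maximum $M'$. I would then feed this directly into \cref{lemma:new_m_syncs}, which guarantees that the population re-synchronizes in $O(M'+\log n)=O(\log n)$ time w.h.p. Combined with \cref{lemma:stays_synced}, the two endpoints line up: the synchronized interval defined by $M$ ends with the first reset, the intervening unsynchronized phase contains exactly one reset per agent, and the resulting configuration is again synchronized, so the next round begins immediately.

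The main obstacle I expect is the "exactly once" clause in the definition of a round: I must rule out that an agent resets twice, either through a stray backup-GRV firing or through a second reset caused by the newly spreading $M'$ overlapping with a still-ongoing phase transition. For the backup-GRV issue I would use the bound on $\PlainInteractions$ guaranteed by synchronicity together with $\tau'$ from \cref{lemma:intervals}, arguing that in the $O(\log n)$ time needed for the reset-epidemic no agent accumulates $\tau' M$ interactions since its last reset. For the second-reset issue I would rely on the overestimation factor $20(k+1)$: the new maximum $M'$ is large enough that agents adopting it via Line \ref{dyn:exchange} do so as part of the ongoing epidemic and not via a fresh reset, while any additional GRVs sampled during the transition are too small (by \cref{lemma:kn_grvs}) to exceed the already-established $M'$. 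Once these two edge cases are ruled out via a union bound over the $O(n^{-k})$ failure events from \cref{lemma:stays_synced,lemma:new_m_syncs,lemma:kn_grvs,lemma:intervals}, the claim follows.
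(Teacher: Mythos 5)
Your proposal is correct and follows essentially the same route as the paper's proof: \cref{lemma:stays_synced} carries the population to the first reset, the reset~$\rightarrow$~exchange epidemic (kept inside the exchange interval because \Time{} is set to $\tau_1\cdot\max\{\Max{},grv\}$ with the trailing estimate \PlainLastMax preserving a $\Theta(\log n)$ phase length) yields $kn$ fresh GRVs, \cref{lemma:kn_grvs} bounds their overestimated maximum, and \cref{lemma:new_m_syncs} then re-synchronizes the population, with a union bound over the $O(n^{-k})$ failure events. Your explicit treatment of the backup-GRV and double-reset edge cases is slightly more detailed than the paper's, but it is the same argument.
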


In the proof, we show that eventually no agents will be left in the exchange or hold phases.
This means the first resetting agent generates a new global maximum.
Then we can reduce this problem to \cref{lemma:new_m_syncs}, without any ``backup'' GRVs being generated.

\begin{proof}
    It follows directly from \cref{lemma:stays_synced} that all agents stay synchronized until the first one resets.
    Before the first agent resets, all agents are in the reset phase, as required by $\tau_{3}$ (see \cref{lemma:intervals}).
    Thus, the first resetting agent will generate a new global maximum among the exchange and hold phases.
    While this might be smaller than $0.5 \log n$, the agent will still stay in the exchange phase for $\Theta(\log n)$ time due to the trailing estimate.
    During this time, they will trigger the ${\text{reset} \rightarrow \text{exchange}}$ epidemic while staying in the exchange phase themselves.

    After every agent has entered the exchange phase, $kn$ GRVs have been generated.
    The maximum of those is at least $0.5 \log n$ w.h.p.
    With this maximum, \cref{lemma:new_m_syncs} holds.
    As all agents stay in the exchange phase until this second epidemic has finished, no further samples are generated, but all agents will become synchronized again.
    Thus, the population synchronizes again w.h.p.
\end{proof}

The following lemma puts everything together and shows the holding time for the Adoption Detection protocol.

\begin{lemma}
    \label{lemma:holding_time}
    Starting from a synchronized configuration, the population will continue directly from round to round for $\Theta(n^{k-1} \log n)$ time w.h.p.
\end{lemma}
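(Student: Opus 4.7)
The plan is to chain together many rounds using Lemma \ref{lemma:round_to_round} and close by a union bound. Starting from a synchronized configuration, Lemma \ref{lemma:round_to_round} guarantees that the round ends in another synchronized configuration with probability $1 - O(n^{-k})$. Iterating this guarantee inductively, conditional on survival so far, every subsequent round also ends in a synchronized configuration with failure probability $O(n^{-k})$.

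To turn a count of rounds into a bound in parallel time, I would first argue that each round has length $\Theta(\log n)$ in parallel time. Once the population is synchronized, all agents share values $\Max{\cdot}, \LastMax{\cdot} \in [0.5\log n,\, 40(k+1)^2 \log n]$, so $M = \Theta(\log n)$; by Lemma \ref{lemma:intervals} the three phases together take $O(nM)=O(n\log n)$ interactions, i.e.\ $\Theta(\log n)$ parallel time. Moreover, the reset-triggered transition into the next synchronized configuration from Lemma \ref{lemma:round_to_round} amounts to one additional completion of the $\text{reset}\rightarrow\text{exchange}$ epidemic plus the subsequent traversal of the clock, contributing at most another $\Theta(\log n)$ parallel time per round. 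Thus each round fits inside $\Theta(\log n)$ parallel time.

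Given a per-round failure probability of $O(n^{-k})$ and a per-round length $\Theta(\log n)$, I chain $N = \Theta(n^{k-1})$ rounds. By a union bound over these $N$ rounds, the probability that every round succeeds and transitions directly into the next synchronized configuration is at least
\[
1 - N \cdot O(n^{-k}) \;=\; 1 - O(n^{-1}),
\]
which is w.h.p. On this event the population remains synchronized for $N \cdot \Theta(\log n) = \Theta(n^{k-1}\log n)$ parallel time, as claimed.

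The main obstacle is bookkeeping, not probability: one must verify that the ``round-to-round'' transition in Lemma \ref{lemma:round_to_round} leaves the system in a configuration whose parameters are themselves covered by the synchronized-configuration hypothesis (so that the same $O(n^{-k})$ bound applies to the next round), and one must ensure the $\PlainInteractions$ accounting across round boundaries never pushes any agent past the $\tau'\cdot M$ threshold that would trigger spurious backup GRVs. Both of these are already guaranteed by Lemmas \ref{lemma:stays_synced} and \ref{lemma:intervals}, so the argument reduces to the straightforward union bound above.
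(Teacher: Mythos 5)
Your proposal matches the paper's proof in both structure and substance: iterate \cref{lemma:round_to_round} from the synchronized start, union-bound over $\Theta(n^{k-1})$ rounds to get success probability $1 - O(n^{-1})$, and bound each round's length by $\Theta(\log n)$ parallel time via the $\Theta(\log n)$ magnitude of the per-round maximum (the paper cites \cref{lemma:kn_grvs} for this, you use the synchronized-configuration range, which is equivalent) together with \cref{lemma:intervals}. This is essentially the same argument, so no further comparison is needed.
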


\begin{proof}
    From \cref{lemma:round_to_round}, it follows that starting from a synchronized configuration, the population continues from one round to the next with probability at least $1 - O(n^{-k})$.
    By the union bound over $n^{k-1}$ rounds, the probability that all continue successively is at least $1 - O(n^{-1})$.
    As all agents reset exactly once, any new global maximum is $\Theta(\log n)$ w.h.p.\ by \cref{lemma:kn_grvs}.
    Together with the requirements for the phase intervals (see \cref{lemma:intervals}), it follows that one round requires $\Theta(\log n)$ time w.h.p.
    Thus, $n^{k-1}$ rounds require $\Theta(n^{k-1}\log n)$ time to complete w.h.p.
\end{proof}

\subsection{Proof of the Theorems}
In this section, we put everything together and prove our main result, \cref{theorem:1}.
We start with the analysis of the space complexity.
\begin{lemma}
    \label{lemma:space}
    Let $C_0$ be an arbitrary initialization, with $s$ denoting the largest value stored in any of the agents' variables.
    Then \cref{alg:dynamic_size_counting} requires $O(\log s + \log \log n)$ bits per agent w.h.p.
\end{lemma}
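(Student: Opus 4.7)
\medskip
\noindent\textbf{Proof plan for \cref{lemma:space}.}
The strategy is to bound each of the four variables $\mathsf{max}$, $\mathsf{lastMax}$, $\mathsf{time}$, and $\mathsf{interactions}$ individually and then take the maximum. I want to show that every stored value is, throughout the polynomial lifetime guaranteed by \cref{lemma:holding_time}, at most $O(s + \log n)$, from which the $O(\log s + \log \log n)$ bit-bound follows by taking logarithms.

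First I would handle $\mathsf{max}$ and $\mathsf{lastMax}$. At any point during the execution, each of these variables either (a) still holds (a copy of) a value present in $C_0$, or (b) holds (the overestimate $20(k+1)$ times) a freshly generated GRV, cf.\ Lines~\ref{dyn:reset_with_sample} and~\ref{dyn:backup_reset}. The case (a) is bounded trivially by $s$. For (b), I would argue that over a polynomial number of rounds (holding time), only $O(n^{k})$ GRVs are ever sampled, since each round has length $\Theta(n\log n)$ and each agent samples $O(1)$ GRVs per round plus an amortized $O(1)$ backups per $\Theta(M)$ interactions. A union bound together with the standard geometric tail $\Pr[\Geom(1/2) \geq t] = 2^{-t}$ shows that the maximum GRV ever sampled is $O(k\log n) = O(\log n)$ w.h.p. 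After the overestimation factor $20(k+1)$, which is constant, both $\mathsf{max}$ and $\mathsf{lastMax}$ are always in $O(s + \log n)$.

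Given the previous bullet, the bound on $\mathsf{time}$ is immediate: every assignment to $\mathsf{time}$ in \cref{alg:dynamic_size_counting} sets it to either $\tau_{1}\cdot \max\{\mathsf{max}, grv\}$ or $\max\{\mathsf{time}_u,\mathsf{time}_v\}-1$, and $\tau_1$ is a constant. By induction on interactions, $\mathsf{time}$ is always at most $\tau_1 \cdot \max_{v\in V}\{\mathsf{max}_v, \mathsf{lastMax}_v\}$, which is $O(s+\log n)$ w.h.p. For $\mathsf{interactions}$, the check on Line~\ref{dyn:backup_grv_start} enforces $\mathsf{interactions}_u \leq \tau' \cdot \max\{\mathsf{max}_u,\mathsf{lastMax}_u\} + 1$ after the first interaction of each agent (since $\mathsf{interactions}$ is reset to $0$ the moment this threshold is exceeded). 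Before the first interaction, the bound is simply $s$. Either way, $\mathsf{interactions}$ is $O(s + \log n)$ at all times.

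Combining the four bounds and taking logs gives a per-agent memory of $O(\log(s + \log n)) = O(\log s + \log \log n)$ bits w.h.p. The main obstacle I anticipate is the union bound on GRV magnitudes: one must argue carefully that the number of GRV samples drawn during the entire holding time remains polynomial, so that the standard geometric tail bound is strong enough. This requires combining \cref{lemma:holding_time} (polynomial number of rounds) with the $O(\log n)$ bound on interactions per round per agent implied by $\tau'$ in \cref{lemma:intervals}; once this is in hand, the space argument itself is essentially bookkeeping.
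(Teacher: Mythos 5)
Your proposal is correct and follows essentially the same route as the paper's proof: bound $\mathsf{max}$ and $\mathsf{lastMax}$ by the larger of the initial value $s$ and the maximum GRV ever generated, bound $\mathsf{time}$ and $\mathsf{interactions}$ by constant multiples ($\tau_1$, $\tau'$) of that quantity, show via a union bound over the polynomially many samples/rounds (the paper invokes \cref{lemma:kn_grvs} per round, you use the geometric tail directly, which is the same calculation) that the generated maximum stays $O(\log n)$ w.h.p., and take logarithms to get $O(\log s + \log\log n)$ bits per agent.
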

\begin{proof}
    All variables of agents store asymptotically equivalent values.
    Let $M$ denote the maximum value generated by our protocol during any execution.
    The $\PlainMax$ and $\PlainLastMax$ variables are thus bound by $M$.
    The $\PlainTime$ variable is bounded by $\tau_1 M = O(M)$.
    Similarly, the $\PlainInteractions$ variable resets back to $0$ after $\tau' M = O(M)$.
    
    As we assume a constant $k$, during each round, the largest maximum will be $O(\log n)$ w.h.p.\ according to \cref{lemma:kn_grvs}.
    By the union bound, in polynomial time, the largest maximum will still be $O(\log n)$ w.h.p.
    As all these values can be stored in binary representation, each variable requires $O(\log s + \log \log n)$ bits.
    Thus, the space requirements are $4 O(\log s + \log \log n) = O(\log s + \log \log n)$ bits w.h.p.
\end{proof}

Finally, we will now give the proofs of \cref{theorem:1,thm:clock}.
\begin{proof}[Proof of \cref{theorem:1}]
    The proof of \cref{theorem:1} follows from \cref{lemma:holding_time} for the holding time, \cref{lemma:convergence_time} for the convergence time, and \cref{lemma:space} for the space complexity.
\end{proof}

\begin{proof}[Proof of \cref{thm:clock}]
By assumption of the theorem, all agents are in a synchronized configuration at time $t_0$.
From \cref{lemma:round_to_round} it follows that the agents will perform one synchronized round.
Observe that the reset is triggered by an epidemic that concludes in $O(n \log n)$ interactions while the constants that control the length of a round are set to a multiple of that time.
The theorem therefore follows immediately from \cref{lemma:holding_time}.
\end{proof}

\section{Empirical Analysis}
\label{sec:simulations}

\begin{figure*}[t]
\centering

\begin{minipage}{0.5\textwidth}
\includegraphics[width=\textwidth]{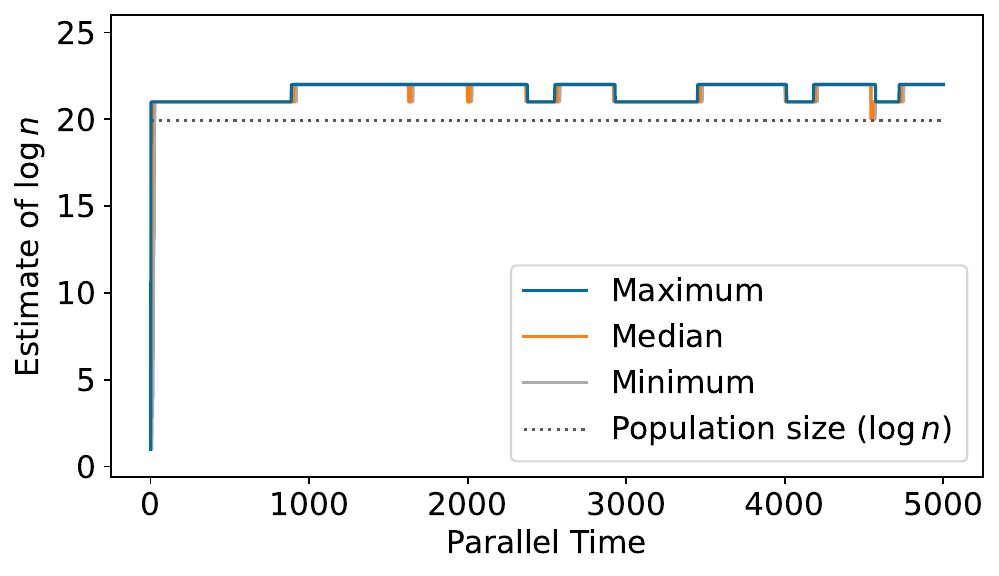}\label{fig:adopt_example_estimate_rand_1}
\caption{Size estimate in a system of $10^6$ agents}
\Description{A plot showing the size estimate in a system of $10^6$ agents}
\end{minipage}\begin{minipage}{0.5\textwidth}
\includegraphics[width=\textwidth]{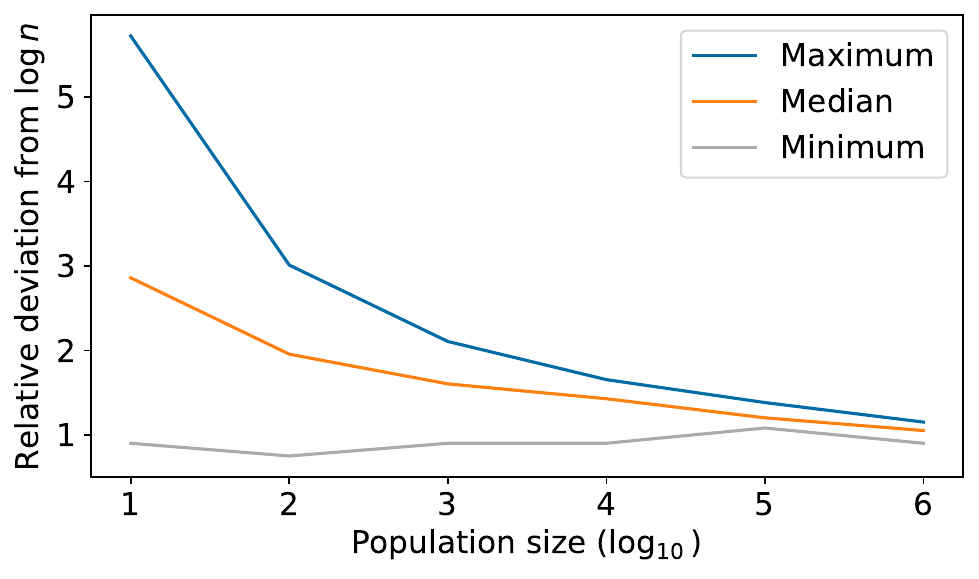}
\caption{Relative error for different values of $n$}
\Description{A plot showing the relative error for different values of $n$}
\label{fig:adopt_relative_deviation}
\end{minipage}

\bigskip

\includegraphics[width=\textwidth,trim=0pt 0pt 0pt 7.0cm, clip]{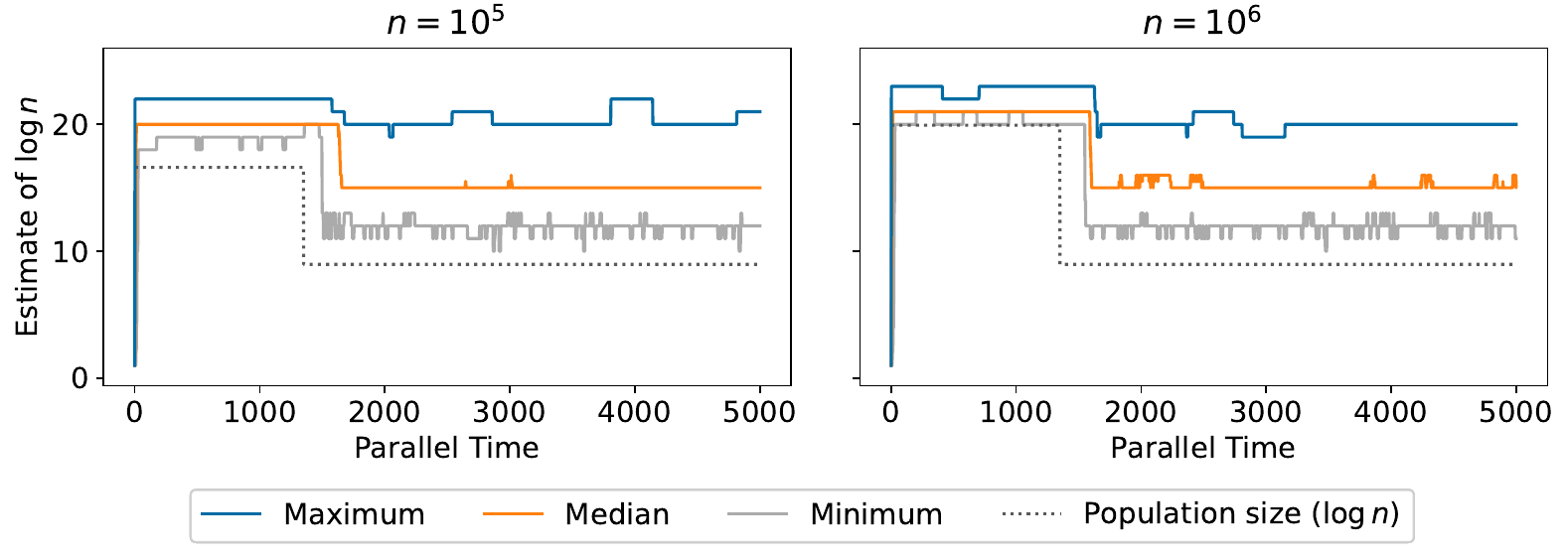}
\caption{Size estimate for different population sizes. 
All but $500$ agents are removed after $1350$ parallel time. }
\Description[A plot showing size estimate for different population sizes]{Size estimate for different population sizes. 
All but $500$ agents are removed after $1350$ parallel time. In our simulations the reported estimate of an agent $u$ is $\max\{\Max u, \LastMax u\}$ without the over-estimation applied in \cref{alg:dynamic_size_counting}. }
\label{fig:adopt_dynamic}
\end{figure*}

In this section, we present empirical data and show that our protocol works well for practical instance sizes.
In particular, our data confirm that our protocol requires only modest constant factors.
Our protocol is parameterized by $\tau_1, \tau_2, \tau_3$ to define the phase lengths.
In our simulation we set the constants to $\tau_{1}=6$, $\tau_{2}=4$, and $\tau_{3}=2$.
Additionally we set $\tau'=20$ and $k=16$, resulting in a theoretical holding time of $\Omega(n^{15})$.
In our simulations the reported estimate of an agent $u$ is $\max\{\Max u, \LastMax u\}$ without the overestimation applied in \cref{alg:dynamic_size_counting}.
Unfortunately, our protocol has an unbounded state space hence we cannot use ready-made simulators \cite{DBLP:conf/esa/BerenbrinkHK0PT20,DBLP:conf/cmsb/DotyS21}.
We therefore implement a custom simulator software using the C++ programming language.
As a source of randomness we use the pseudo-random number generator (PRNG) ranlux \cite{LUSCHER1994100}, a high quality PRNG with a period length of over $10^{100}$ \cite{james_review_2020}.
To ensure independence among our simulations we seed the PRNG with a non-deterministic random number using the C++ random\_device generator.
Our simulator software is available from our public GitHub repository.\footnote{\url{https://github.com/dcmx/DynamicSizeCounting}}

We simulate populations of up to $10^6$ agents for $5000$ parallel time steps.
Each data point is generated from 96 independent simulation runs.
To ensure quick simulation times, we do not report the configuration after each interaction but instead we create a snapshot every $n$ interactions.
In our first plot in \cref{fig:adopt_example_estimate_rand_1} we assume that the system is initially empty and show the minimum, median, and maximum values of all 96 estimates over 5000 parallel rounds for $n = 10^6$.
Then in \cref{fig:adopt_relative_deviation} we plot the relative error for varying population sizes $n = 10^1, 10^2, \dots, 10^6$. Together, the first two plots confirm our theoretical results by showing that our algorithm has a long holding time and the resulting estimate approaches $\log n$ as $n$ grows.
Similarly to \cite{DBLP:conf/sand/DotyE22} we also present in \cref{fig:adopt_dynamic} simulation data where an adversary changes the population size during the simulation.
To this end, we reduce the population size to $500$ after $1350$ parallel time.
Our data show the rapid adaption of the estimate to the new population size.
Note that the estimates of the decimated populations deviate a lot, in particular the maximum values.
This conforms to the findings regarding relative deviation from \cref{fig:adopt_relative_deviation}.
Nevertheless, the drop is clearly visible in the estimates, particularly for larger values of $n$.

\section{Conclusion}
\label{chapter:6}

We present a new protocol that solves the dynamic size counting problem.
In addition, our protocol constitutes a loosely-stabilizing uniform phase clock.
The main reason why our protocol can be used to synchronize the agents into phases is its inherent oscillating behavior.
Intuitively, our protocol uses similar phase transitions as the loosely-stabilizing phase clock by \textcite{DBLP:conf/sand/BerenbrinkBHK22}.
This is the main difference to the work by \textcite{DBLP:conf/sand/DotyE22}; their protocol uses the detection protocol by \textcite{DBLP:conf/dna/AlistarhDKSU17} in a continuous fashion.
We believe that our clocks are of independent interest:
loosely-stabilizing phase clocks have applications as an underlying synchronization mechanism for further loosely-stabilizing and dynamic population protocols.
We highlight that it is an intriguing open question to design a loosely-stabilizing phase clock that uses only $o(\log n)$ (or even only constantly many) states.

Our protocol has a significantly reduced space complexity compared to the best previously known protocol for this problem \cite{DBLP:conf/sand/DotyE22}.
This improvement comes at the expense of possibly a larger convergence time.
We remark, however, that from an asymptotic point of view this increase in the convergence time only ``strikes'' if the system is initialized with an estimate $\hat n$ that exponentially over-estimates the true population size $n$.
If $\hat n$ is within polynomial bounds of $n$, our convergence time remains asymptotically the same.
It is an open problem to formally prove a non-trivial trade-off between the required number of states and the convergence time.

Regarding the quality of the approximation, \textcite{DBLP:conf/podc/DotyE19} use in the static setting the average of $O(\log n)$ maxima of $n$ GRVs each.
This leads to an additive factor approximation of $\log n$.
It is an open question whether a similar extension of our protocol could also provide agents with a more accurate estimate.
\Textcite{DBLP:journals/tcs/DotyE21} furthermore show that non-uniform protocols can be made uniform by composing them with a size counting protocol.
They assume a fixed population size and thus use the size counting protocol only once.
In dynamic populations, non-uniform protocols must be restarted every time the size changes.
A formal analysis of a general framework that allows to compose dynamic size counting protocols with non-uniform protocols in the dynamic setting is an open problem.

\begin{acks}
The financial support by the Austrian Federal Ministry for Digital and Economic Affairs, the National Foundation for Research, Technology and Development, as well as the Christian Doppler Research Association is gratefully acknowledged.
\end{acks}

\printbibliography

\clearpage
\onecolumn

\appendix

\section*{Appendix}

\section{Omitted Proofs and Additional Details}
\label{apx:omitted-proofs}

\subsubsection*{Geometric Random Variables}

The following algorithm can be used by an agent to generate the maximum of $k$ GRVs.

\begin{algorithm}[H]
\caption{GRV($k$), executed when an agent generates the maximum of $k$ new GRVs.}
\label{alg:grv}
\begin{algorithmic}[1]
\Function{GRV}{$k$}
\State $M \gets 1$
\For{$i = 1 \text{ to } k$} \label{line:k_grvs}
    \State $grv \gets 1$ \label{line:grv_init}
    \While{a fair coin lands on heads} \label{line:grv_loop}
        \State $grv \gets grv + 1$ \label{line:grv_increase}
    \EndWhile
    \State $M \gets \max\{grv, M\}$ \label{line:max_k_grvs}
\EndFor
\State \Return{$M$}
\EndFunction
\end{algorithmic}
\end{algorithm}

\subsubsection*{Expected Number of Interactions.}
One unit of parallel time consists of $n$ interactions.
Within $c\log n$ time, every agent is expected to initiate $\Theta(c\log n)$ interactions.
We obtain the following lemma by substituting $\delta = \sqrt{{k}/{c}}$ in Prop.\ 1 from \cite{DBLP:journals/eatcs/ElsasserR18} and adapting it to one-way interactions.

\begin{lemma}[Prop. 1 in \cite{DBLP:journals/eatcs/ElsasserR18}]
    \label{lemma:agent_participations}
    Let $k, c$ be positive integers with $k < c$. 
    Within $c \log n$ parallel time, each agent initiates between ${c(1-\sqrt{{k}/{c}})\log n}$ and ${c(1+\sqrt{{k}/{c}})\log n}$ interactions with probability at least $1 - n^{-\Theta(k)}$.
\end{lemma}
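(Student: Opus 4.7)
}
The plan is to model the interaction count of a fixed agent as a binomial random variable, apply a multiplicative Chernoff bound with the specific choice $\delta = \sqrt{k/c}$, and finish with a union bound over the $n$ agents. Concretely, in the one-way interaction model each of the $c n \log n$ interactions that occur during $c \log n$ parallel time units selects its initiator uniformly at random and independently among the $n$ agents. Hence, for any fixed agent $u$, the total number of interactions it initiates, call it $X_u$, is distributed as $\Bin(c n \log n, 1/n)$ with mean $\mathbb{E}[X_u] = c \log n$.

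Next, I would apply the standard multiplicative Chernoff bounds to $X_u$: for any $0 \leq \delta \leq 1$,
\[
\Pr\bigl[X_u \geq (1+\delta) c \log n\bigr] \leq \exp\bigl(-\delta^2 c \log n /3\bigr), \qquad
\Pr\bigl[X_u \leq (1-\delta) c \log n\bigr] \leq \exp\bigl(-\delta^2 c \log n /2\bigr).
\]
Substituting $\delta = \sqrt{k/c}$ (which lies in $(0,1)$ by the hypothesis $k < c$) gives $\delta^2 c \log n = k \log n$, so both tail probabilities are at most $\exp(-k \log n / 3) = n^{-k/3}$. Therefore a single agent deviates from its expectation by more than a $(1 \pm \sqrt{k/c})$ factor with probability at most $2 n^{-k/3}$.

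Finally, I would take a union bound over all $n$ agents: the probability that \emph{any} agent initiates fewer than $c(1-\sqrt{k/c}) \log n$ or more than $c(1+\sqrt{k/c}) \log n$ interactions is at most $2 n \cdot n^{-k/3} = 2 n^{1-k/3}$, which is $n^{-\Theta(k)}$ (the asymptotic constant in $\Theta$ absorbs the additive $1$). I expect no real obstacle here: the interactions are chosen independently by the scheduler so the binomial model is exact, $\delta \leq 1$ is guaranteed by $k < c$, and the only thing to be careful about is explicitly noting the $n$-factor loss from the union bound and confirming it is swallowed by the $\Theta(k)$ in the exponent. This reproduces Prop.~1 of \cite{DBLP:journals/eatcs/ElsasserR18} adapted from two-way to one-way interactions, which is precisely the stated adaptation.
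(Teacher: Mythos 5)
Your proof is correct and matches the paper's approach: the paper does not reprove this lemma but simply cites Prop.~1 of \cite{DBLP:journals/eatcs/ElsasserR18} with the substitution $\delta = \sqrt{k/c}$, and your binomial-plus-Chernoff argument with that exact choice of $\delta$ is the standard derivation behind that proposition, adapted to one-way interactions as the paper indicates. The only cosmetic caveat is that your union bound costs a factor of $n$, turning the exponent $k/3$ into $k/3-1$, which is harmless for the stated $n^{-\Theta(k)}$ guarantee (and the paper applies the bound per agent, taking union bounds later where needed).
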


\subsubsection*{Maximum of Geometric Random Variables}

The following lemma is used in the proof of \cref{lemma:kn_grvs}.

\begin{lemma}[Lemma D.7 in \cite{DBLP:conf/podc/DotyE19}]
    \label{lemma:n_grvs}
    Let $G = \{G_1, G_2, \dots, G_n\}$ be a set of $n \geq 50$ i.i.d.\ GRVs and define $M = \max_{0 < I \leq |G|}\{G_i\}$.
    Then \[
    \Pr[0.5 \log n \leq M \leq 2 \log n] \geq 1 - n^{-k}
    \]
\end{lemma}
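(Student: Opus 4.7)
The plan is to bound the two failure events separately and combine them with a union bound: let $A = \{M < 0.5\log n\}$ be the lower-tail failure and $B = \{M > 2\log n\}$ the upper-tail failure, so that the central event is the complement of $A \cup B$. The two events call for opposite techniques, which makes the proof short but asymmetric.

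For the lower-tail event $A$, the plan is to exploit independence directly. Using the standard geometric tail $\Pr[G_i \geq t] = 2^{-(t-1)}$, I get $\Pr[G_i < 0.5\log n] = 1 - 2 n^{-1/2}$, and by independence
\[
\Pr[A] \;=\; \prod_{i=1}^{n}\Pr[G_i < 0.5\log n] \;\leq\; \bigl(1 - 2 n^{-1/2}\bigr)^{n} \;\leq\; \exp\bigl(-2\sqrt{n}\bigr).
\]
For $n \geq 50$ this is super-polynomially small and, in particular, dwarfs any $n^{-k}$ term for constant $k$, so the lower tail leaves a comfortable amount of slack. The hypothesis $n \geq 50$ is needed only to make the constants in this inequality behave cleanly.

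For the upper-tail event $B$, the plan is a union bound together with the geometric tail $\Pr[G_i > t] = 2^{-t}$, giving $\Pr[B] \leq n \cdot 2^{-2\log n} = n^{-1}$. This is the main obstacle: as literally stated, the threshold $2\log n$ only buys failure probability $n^{-1}$, not $n^{-k}$ for general $k$. The cleanest fix is to replace the threshold by $(k+1)\log n$, since the same union bound then yields $\Pr[B] \leq n \cdot n^{-(k+1)} = n^{-k}$. Downstream, in \cref{lemma:kn_grvs}, the authors inflate both the number of variables (to $kn$) and the upper threshold (to $2(k+1)\log n$), and that is where the ``$k$'' in the failure probability genuinely appears; I would therefore read the constant $2$ in the present lemma as a placeholder that must scale with $k$. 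Combining the lower- and upper-tail bounds by a union bound then yields the claimed $1 - n^{-k}$ central concentration.
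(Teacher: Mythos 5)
Your reconstruction is the standard argument and is essentially correct, but note that the paper itself gives no proof of this statement: it is imported verbatim as Lemma~D.7 of \cite{DBLP:conf/podc/DotyE19}, so there is no internal proof to compare against. Your two tail bounds are the right ones — the lower tail via independence, $\Pr[M < 0.5\log n] \leq (1 - n^{-1/2})^{n} \leq e^{-\sqrt{n}}$, and the upper tail via a union bound, $\Pr[M > 2\log n] \leq n\cdot 2^{-2\log n} = n^{-1}$ — and your diagnosis of the ``$n^{-k}$'' is correct and worth stating plainly: with the threshold $2\log n$ the upper tail is genuinely $\Theta(n^{-1})$ (the maximum of $n$ geometrics exceeds $2\log n$ with probability about $1-(1-n^{-2})^{n}\approx n^{-1}$), so the statement as printed is false for any constant $k\geq 2$ and the exponent should read $1$. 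This is consistent with how the lemma is actually consumed: in the proof of \cref{lemma:kn_grvs} the authors take $\Pr[E_i]=n^{-1}$ for each block of $n$ GRVs and $\Pr[M'\leq 2\log N]=1-N^{-1}$ for $N=kn^{k}$ variables, i.e., precisely the $k=1$ version of this lemma, and they manufacture the $n^{-k}$ failure probability downstream exactly as you describe — by multiplying $k$ independent lower-tail failures and enlarging the upper threshold to $2(k+1)\log n$. One cosmetic caveat: $\Pr[G_i\geq t]=2^{-(t-1)}$ holds for integer $t$, so for non-integer $0.5\log n$ you should round, which only improves your bound; this does not affect the conclusion.
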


We now prove \cref{lemma:kn_grvs}.

\restatekngrvs*

\begin{proof}

    For the lower bound, we show that
    $\Pr[M \geq 0.5 \log n] \geq 1 - n^{-k}.$
        We assume $k n$ i.i.d.\ GRVs.
        This is equivalent to looking at the maximum of $k$ independent maxima of $n$ GRVs.
        As these trials are independent, we can use the probability for each trial from \cref{lemma:n_grvs}.
        Let $E_i$ be the event that $\max_{j \in \left[i \cdot n, (i+1) \cdot n\right)} \{G_i\}
        < 0.5 \log n$.
        Then we get 
        \begin{align*}
        Pr\left[M \geq 0.5 \log n\right] &= 1 - \prod_{i = 0}^{k} Pr\left[E_i\right] \\
        &= 1 - \left(n^{-1}\right)^k \\
        &= 1 - n^{-k}.
        \end{align*}
        
    For the upper bound,
        we will assume another set containing $k n^k$ i.i.d.\ GRVs and reference their maximum as $M'$.
        As $k \geq 1$, it holds that $k n^k \geq k n$.
        Thus, the upper bound for $M'$ also bounds $M$ and its probability.
        Together with \cref{lemma:n_grvs}, we get 
        \begin{align*}
        Pr\left[M' \leq 2 \log (k \cdot n^k)\right] &= Pr\left[M' \leq 3 k \ln n + 2 \log k \right] \\
        &= 1-(k \cdot n^k)^{-1} \\
        &= 1-k^{-1} n^{-k} \\
        &\geq 1 - n^{-k}.
        \end{align*}
        As $2 k \log n + 2 \log k \leq 2 (k+1) \log n$ for all $n \geq k$, this simplification provides a sufficient upper bound.
        Thus, $$Pr\left[M \leq 2 (k+1) \log n \right] \geq Pr\left[M' \leq 2 (k+1) \log n \right] \geq 1 - n^{-k}.$$

    By the union bound, the probability of failure is at most $2 n^{-k} = O(n^{-k})$. Hence 
    \begin{align*}
        Pr\left[(0.5 \log n) \leq M \leq (2(k+1) \log n)\right] \geq 1 - O(n^{-k}).
        \tag*{\qedhere}
    \end{align*}
\end{proof}

\clearpage
\section{Additional Simulation Results}

\Cref{fig:adopt_60_results} shows the minimum, median, and maximum estimate values of populations initialized with an estimate of $60$. The data show that for small population sizes the initial estimate indeed dominates the convergence time.

\begin{figure}[H]
  \centering
  \includegraphics[width=1.02\linewidth]{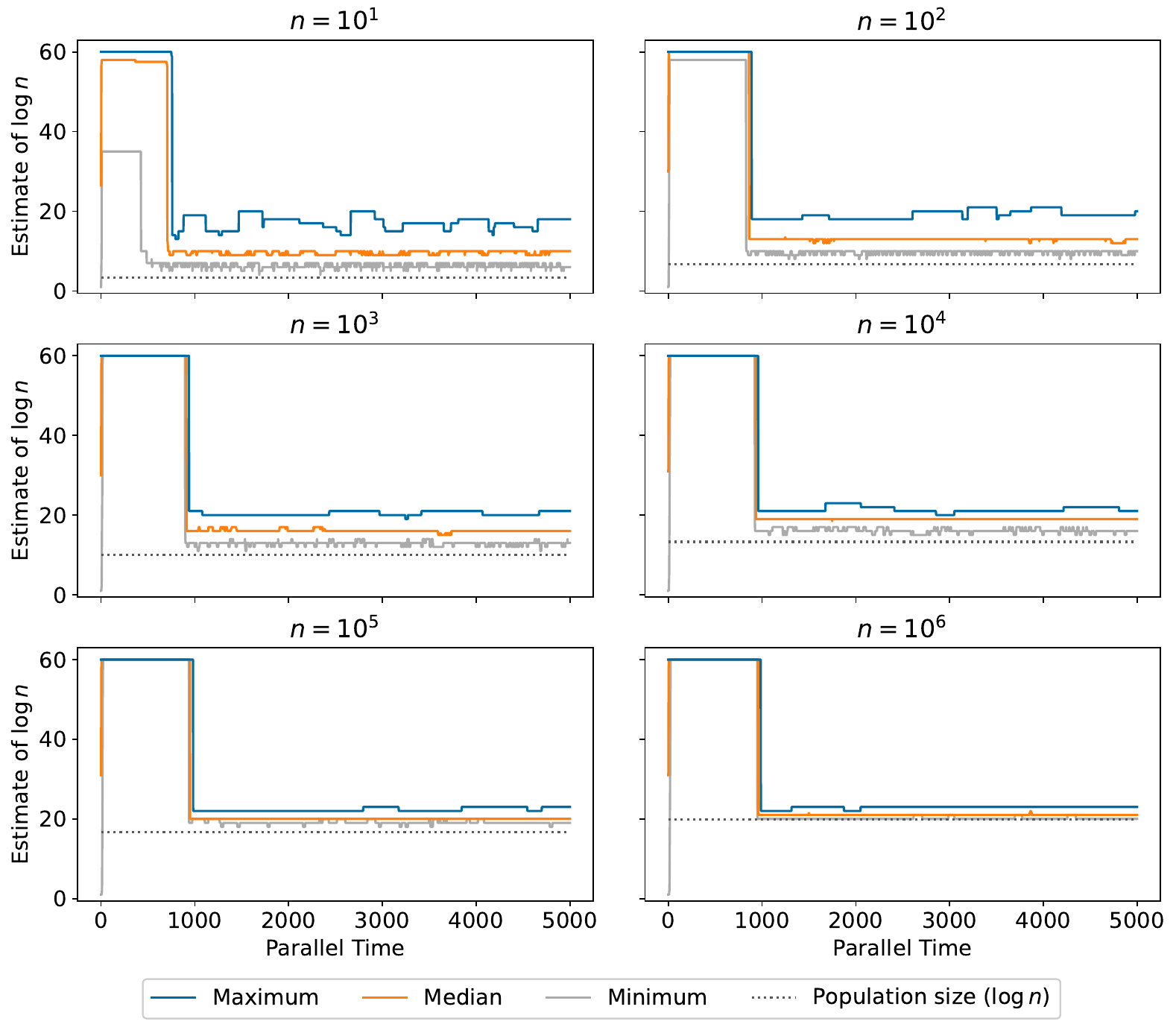}
  \caption[estimates for initial value $M=60$.]{Estimates for different population sizes initialized with an estimate of $60$}
  \Description[A plot showing estimates for initial value $M=60$]{A plot showing estimates for different population sizes initialized with an estimate of $60$}
  \label{fig:adopt_60_results}
\end{figure}

\clearpage

\section{Countdown with Higher Value Propagation} \label{apx:chvp}

In this appendix we present a slightly modified analysis for the detection protocol introduced in \cite{DBLP:conf/dna/AlistarhDKSU17}.
In particular, we extend the analysis to the case of a one-sided variant, where only the responder changes its state.
In the following analysis, we assume the (unbounded) state space
$Q = \{0, 1, 2, \dots\}$
and analyze a protocol based on the following transition rule:
\begin{align}
    \left(x, y\right) &\rightarrow \left(\min\{x,y\}+1, y\right) \label{eq:clvp-2}
\end{align}
Note that this protocol uses an infinite number of states and thus is only useful for our theoretical analysis. We will show later how it captures the behavior of the original protocol.

\restatechvpmax*

\begin{proof}
In our proof we follow along the lines of \cite{DBLP:conf/dna/AlistarhDKSU17}.
Note that this proof thus also uses the opposite of the CHVP process: counting up with lower value propagation (see \cref{eq:clvp-2}).
We assume without loss of generality that initially $l = \min_{v \in V}\{C_0(v)\} = 0$; any $l > 0$ would provide a simple linear offset without affecting the proven time bounds.

Recall that a configuration maps agents to states.
We define the potential function for the configuration $C$ at time $t$ as
\[ \Phi_t(v) = 6^{-C_t(v)}, \Phi_t = \sum_{v \in V} \Phi_t(v) . \]
If both agents $(i,j)$ have the same state, we compute the new potential as
\[ \textbf{E}[\Phi_{t+1} | \Phi_t] = \Phi_t - 6^{-C_t(i)} + 6^{-(C_t(i)+1)} = \Phi_t - \frac{5}{6} 6^{C_t(i)}. \]
Thus, the initiating agent is expected to lose $5/6$ of its potential.
However, if the states differ, we can define $C_t(j) = C_t(i) + d$ with $d \geq 1$ and select the initiator with a coin flip.
Thus, $i$ has a $50\%$ chance of being the initiator and a $50\%$ chance of being the responder.
The expected potential can then be expressed as
\begin{alignat*}{2}
    \textbf{E}[\Phi_{t+1} | \Phi_t] &= \frac{1}{2} \textbf{E}[\Phi_{t+1} | \Phi_t \land \text{$i$ is initiator}] &&+ \frac{1}{2} \textbf{E}[\Phi_{t+1} | \Phi_t \land \text{$i$ is responder}] \\
    &= \frac{1}{2} \left( \Phi_t - 6^{-C_t(i)} + 6^{-(C_t(i) + 1)} \right) &&+ \frac{1}{2} \left( \Phi_t - 6^{-(C_t(i)+d)} + 6^{-(C_t(i)+1)} \right) \\
    &\leq \Phi_t - \frac{1}{3} 6^{-C_t(i)}.
\end{alignat*}
Thus, the initiator's potential is expected to drop by at least $1/3$ in each interaction.
Each agent initiates an interaction with probability $1/n$.
Then the following bound holds:
\begin{align*}
 \textbf{E}[\Phi_{t+1} | \Phi_t] &\leq \Phi_t - \sum_v \text{Pr($v$ initiates in round $t$)} \cdot \frac{1}{3} \Phi_t(v) 
 = \left(1 - \frac{1}{3n}\right) \Phi_t.
 \end{align*}

    Substituting $\Phi_0 \leq n$ and fixing $
    t = 3 \cdot n \ln(6^{\Delta} \cdot n^{k})
    < 7n(\Delta + k \log n)
    $ we find
    \[ \textbf{E}[\Phi_t] \leq \left(1 - \frac{1}{3n}\right)^t \cdot n \leq e^{-\ln(6^{\Delta} \cdot n^{k})} \cdot n = 6^{-\Delta} \cdot \frac{1}{n^{k}}. \]
    By Markov’s inequality, this means that
    \[
\Pr\left[\sum_{v \in V} \Phi_{t}(v)  \geq 6^{-\Delta} \right] \leq n^{-k} ,
    \]
    and thus there is no agent in any of the states $X_0, X_1, \dots, X_{\Delta}$ with probability at least $1 - n^{-k}$.
\end{proof}

In the following lemma, we investigate the upper bound of the CHVP variable.
Initially, the minimum value can be arbitrarily small.
We now show that after $O(\log n)$ time, this minimum will be close to the initial maximum $m$.
To do this, we model the CHVP process as an epidemic starting from $m$.
All agents starting with $m$ start off as infected, with the rest being uninfected.
Once all agents have been infected, their maximum value will be bounded by the number of interactions each agent initiates.

\restatechvpmin*

\begin{proof}
Similar to \cref{lemma:chvp_max}, we use the CLVP process from \cref{eq:clvp-2}, and assume that $l = \min_{v \in V}\{C_0(v)\} = 0$ without loss of generality, as any $l > 0$ represents a linear offset.
In our proof, we couple the CLVP process with a modified process that incorporates epidemic spreading.
This process then provides an upper bound for the maximum value of a CLVP process, and thus a lower bound for a CHVP process.
To this end, we assume that initially at least one agent is ``infected'' and all other agents are ``uninfected''.
When an uninfected agent interacts with an infected agent, it becomes infected as well.

In our process, we define that an uninfected agent has state $-1$, and infected agents have a state larger than or equal to $0$.
Those agents that have value $0$ in the original process CLVP are initially infected in our modified process and thus have value $0$ as well.
Our modified process uses the following transitions at time $t$.
\begin{alignat}{2}
    (u < 0, v < 0) & \rightarrow (u, v) \\
    (u < 0, v \geq 0) & \rightarrow (v + 1, v) \\
    (u\geq 0, v\phantom{{}\geq 0}) & \rightarrow (u + 1, v)
\end{alignat}

In contrast, the transition rule in CLVP reads
\begin{alignat}{2}
    (u, v) & \rightarrow (\min\{u,v\} + 1, v).
\end{alignat}
It is therefore straightforward to verify via a coupling that the value of each infected agent in our modified process is an upper bound on the value of the same agent in CLVP, and thus a lower bound for the original CHVP process.

\def\C#1#2{C_{#1}(#2)}

For every agent $w$ which is initially not infected, we define $\pi(w)$ to be the agent which infected agent $w$ and $\tau(w)$ to be the time when agent $w$ is infected.
For agents that are initially infected, we define $\pi(w) = w$ and $\tau(w) = 0$.
Recall that $\C{t}{u}$ is the random variable that describes the value of agent $u$ at time $t$.
We now consider an arbitrary but fixed agent $v$
and observe that the final value of agent $v$ at time $\tau$ can be written as
\[
\C{\tau}{v} = \C{\tau(\pi(v))}{\pi(v)} + \Bin(\tau - \tau(\pi(v)), 1/n),
\]
where $\Bin(t, 1/n)$ means a random variable with binomial distribution with parameters $t$ and $1/n$.
Indeed, this models the observation that the final value of agent $v$ equals the value it adopts when it becomes infected at time $\tau(v)$ plus the number of increments in subsequent interaction.
These increments by $1$ occur when agent $v$ initiates an interaction, that is, with probability $1/n$.
Recursively applying this observation gives us
\begin{align*}
\C{\tau}{v} &= \C{\tau(\pi(\pi(v)))}{\pi(\pi(v))} + \Bin(\tau(\pi(v)) - \tau(\pi(\pi(v))),1/n) + \Bin(\tau - \tau(\pi(v)),1/n) \\
&= \C{\tau(\pi(\pi(v)))}{\pi(\pi(v))} + \Bin(\tau - \tau(\pi(\pi(v)),1/n)\\
& \dots
\intertext{and ultimately, using $C_0(w) = 0$ for any agent $w$ that is initially infected, we get}
\C{\tau}{v} &= \Bin(\tau,1/n).
\end{align*}
We apply Chernoff bounds to $\C{\tau}{v}$ and get
\begin{align*}
\Pr[{\C{\tau}{v} \geq (1 + \delta) E[\C{\tau}{v}] }] &= \Pr\left[{\C{\tau}{v} \geq \left(1 + \sqrt{\frac{3 \ln 2}{7}\frac{k+1}{k}}\right) 7(\Delta + k \log n) }\right] \\
    &\leq e^{-\frac{k+1 }{k} \ln 2} n^{-1-k} \\
    &\leq n^{-1-k}.
\end{align*}

The lemma now follows from the union bound over all agents together with the observation from \cref{lemma:agent_participations} that at time $\tau$ all agents are infected w.h.p.
\end{proof}

\end{document}